\journal{TBA}
\newtheorem{theorem}{Theorem}[section]
\newtheorem{corollary}[theorem]{Corollary}
\newtheorem{lemma}[theorem]{Lemma}
\newtheorem{definition}[theorem]{Definition}
\newtheorem{remark}[theorem]{Remark}
\numberwithin{equation}{section}
\DeclareMathOperator{\sign}{sign}
\DeclareMathOperator{\tr}{tr}
\DeclareMathOperator{\argmax}{argmax}
\DeclareRobustCommand\widecheck[1]{{\mathpalette\@widecheck{#1}}}
\def\@widecheck#1#2{%
    \setbox\z@\hbox{\m@th$#1#2$}%
    \setbox\tw@\hbox{\m@th$#1%
       \widehat{%
          \vrule\@width\z@\@height\ht\z@
          \vrule\@height\z@\@width\wd\z@}$}%
    \dp\tw@-\ht\z@
    \@tempdima\ht\z@ \advance\@tempdima2\ht\tw@ \divide\@tempdima\thr@@
    \setbox\tw@\hbox{%
       \raise\@tempdima\hbox{\scalebox{1}[-1]{\lower\@tempdima\box
\tw@}}}%
    {\ooalign{\box\tw@ \cr \box\z@}}}
\begin{document}
	
\begin{frontmatter}
	
\title{On regularized optimal execution problems and their singular limits}

\author[IMEUFF]{Max O. Souza}
\ead{maxsouza@id.uff.br}

\author[IMEUFF]{Y. Thamsten}
\ead{ythamsten@id.uff.br}

\address[IMEUFF]{Instituto de Matem\'atica e Estat\'istica, Universidade Federal Fluminense, Niter\'{o}i, RJ, Brasil}

\date{}
	
	
\begin{abstract}
    We investigate the portfolio execution problem under a framework in which volatility and liquidity are both uncertain. In our model, we assume that a multidimensional Markovian stochastic factor drives both of them. Moreover, we model indirect liquidity costs as temporary price impact, stipulating a power law to relate it to the agent's turnover rate. We first analyze the regularized setting, in which the admissible strategies do not ensure complete execution of the initial inventory. We prove the existence and uniqueness of a continuous and bounded viscosity solution of the Hamilton-Jacobi-Bellman equation, whence we obtain a characterization of the optimal trading rate. As a byproduct of our proof, we obtain a numerical algorithm. Then, we analyze the constrained problem, in which admissible strategies must guarantee complete execution to the trader. We solve it through a monotonicity argument, obtaining the optimal strategy as a singular limit of the regularized counterparts.
\end{abstract}	

\begin{keyword}
Optimal Execution; Illiquid Markets; Stochastic Volatility; Stochastic Liquidity; Viscosity Solutions. 
\MSC[2010] 	35Q93; 49N90; 90C39; 93E20.  
\end{keyword}

\end{frontmatter}	

\section{Introduction} \label{sec:Intro}

We refer to the situation in which a financial agent must execute a large trade as the execution problem. It constitutes a significant part of algorithmic trading. A paradigm that we can use to find a solution is to set up a trading schedule: stipulate a time horizon to complete the program, and fractionate the originally large order into smaller ones. In this setting, to identify an adequate rate of trading, we must take into account the trade-off among two major financial complexities: transaction costs; exposure to price risk. We describe how these effects influence trading decisions as follows. On the one hand, if we send too sizeable orders, then we will undergo considerable price impact.\footnote{As put in \cite{bouchaud2009price}, ``\textit{price impact} refers to the correlation between an incoming order (to buy or to sell) and the subsequent price change''.} For instance, in this case, our behavior may demand the consumption of many layers of liquidity of the Limit Order Book (LOB) at each time we execute a trade. On the other hand, if we take too long to complete execution, then we will be excessively exposed to the inherent uncertainty in the price movements.


Systematic research on optimal execution began with the works of Almgren and Chriss (AC) \cite{almgren1999value,almgren2001optimal} — see also \cite{bertsimas1998optimal}. The AC model has as among its premises the following core assumptions: (i) Bachelier price dynamics; (ii) price impacts are of two types, viz., a temporary and a permanent one, depending on their persistence in time; (iii) the trader only sends market orders. This model proved itself to be very useful as a framework for the investigation of the trade execution problem. Some efforts reevaluating (ii) include \cite{almgren2003optimal,curato2017optimal,gatheral2012transient}. Regarding (iii), papers \cite{cartea2015optimal,gueant2012optimal} consider the use of limit orders in execution algorithms. A geometric Brownian motion models the stock price in \cite{gatheral2011optimal}, in contradistinction to assumption (i). For extensions envisaging to account for a non-constant market volume, see \cite{cartea2016incorporating,gueant2016financial}.

In the standard continuous-time model of AC, the temporary and permanent price impacts are functions of the speed of trading of the agent. In this direction, the monographs \cite{cartea2015algorithmic,sophie2018market} contain statistical analyses of micro-structural aspects that are relevant to the study of algorithmic trading. We remark that they corroborate, to an extent, the adequacy of the AC setting with reality. A popular choice for the functional form of the impacts is the linear model: one assumes that the impacts per share are proportional to the size of the orders sent to the market by the trader (usually with different slopes). Under suitable hypotheses, Huberman and Stanzl proved in \cite{huberman2004price} that the absence of quasi-arbitrage implies that the permanent price impact must indeed manifest in this way. This assumption also rules out price manipulation, as well as dynamic arbitrage, see \cite{gatheral2013dynamical,gueant2016financial}. 

There are many insightful advances in the area of market microstructure research regarding the study price impact; we refer to \cite{bouchaud2009price,sophie2018market} and the references therein for a broad account of these. For the sake of analytical tractability, some works tend to consider a linear model for temporary and permanent price impacts, leading to a mean-variance framework, see for instance the seminal paper \cite{almgren1999value} and also \cite{cartea2016incorporating,bank2017hedging,ekren2022utility}. In general, however, assuming a nonlinear price impact instead of the linear model --- such as a power law model --- leads to a more realistic shape of the impact curves, see \cite{almgren2005direct, bacry2015market}. In particular, in settings in which explicit solutions are not available in the linear case, such as the one we will investigate in this work, it is pertinent to consider this more general assumption, resorting to numerical methods to compute the solutions. The references \cite{cartea2015algorithmic,gueant2016financial} contain a few works dealing with nonlinear temporary price impact. See also \cite{almgren2003optimal} for an early development of the modeling in this direction, and \cite{leal2020learning} for an application of neural networks to the execution problem with a power law assumption on the nonlinearity of the cost under discussion.


In the AC setting, we find the optimal execution strategy through an optimization problem in a mean-variance framework. The design of the performance criteria is done in a way to represent a measure of execution quality. More precisely, it consists of minimizing the mean and the variance of the difference between the realized revenue and a benchmark price. A common choice for the latter is the pre-trade price, leading to a criterion known as implementation shortfall (IS), which we will consider henceforth. Alternatives are volume-weighted average price (VWAP), target close (TC), and percentage of volume (POV); we refer to \cite{cartea2015algorithmic,gueant2016financial} for in-depth discussions about these criteria.

It is a stylized fact that a flat or known deterministic profile for the volatility and liquidity is a suitable assumption for highly liquid assets, such as large-capitalization US stocks. It leads, in turn, to deterministic closed-form formulas for the corresponding optimal strategies. An advantage of this fact is that we can pre-compute the trading curve. However, as it is described in \cite{almgren2012optimal}, small-cap stocks are more difficult to trade. Throughout the trading day, there are moments in which trading is cheap, and others in which it is expensive. Furthermore, these moments alternate randomly, making the problem more difficult. Thus, when performing execution of portfolios containing less liquid assets, considering uncertainty in their volatility and liquidity becomes an important feature. In effect, an illiquid scenario ought to lead to higher price impacts; this, together with lower volatility levels, incentivizes the trader to slow down the execution program. Analogously, a situation of higher liquidity and volatility urges her to speed up. In these circumstances, the need for strategies that adapt to the current market state arises. We stress that there are also many studies on intra-day volatility estimation, of which we mention \cite{gatheral2010zero}. Thus, implementing these adaptive strategies in practice seems to be a feasible endeavor.

In the present work, we consider a risk averse agent who assesses the performance of the admissible strategies by utilizing criteria of the IS kind. However, two distinguishing modelling features here are that we allow the temporary impact to be of a general power law type, and that, at a first moment, we tolerate hitting the terminal target only approximately. Then, we consider the constrained problem, i.e., the framework in which the agent must necessarily finish with zero inventory. Moreover, we consider volatility and the temporary price impact parameter as stochastic processes, with a multidimensional Markov diffusion as their driver. In the literature, a paper considering stochastic price impacts, but constant volatility, is \cite{barger2019optimal}. Also, this work restrains to linear impacts. There is a model for treating the problem of slicing a VWAP order under stochastic volatility and market volume, but with no form of execution costs, in \cite{konishi2002optimal}. The framework of \cite{graewe2017optimal} considers constant temporary and permanent price impacts, but stochastic resilience and volatility (the latter through the uncertain risk aversion parameter). Some advances in the execution problem when simultaneously considering stochastic liquidity and volatility are \cite{almgren2012optimal,graewe2015non,graewe2018smooth,horst2020continuous,horst2019portfolio}, for the single agent case, \cite{cheridito2014optimal}, in a discrete time setting, and \cite{evangelista2020finite,fu2020mean}, in game-theoretic models. 

We prove the existence and uniqueness of viscosity solutions of the Hamilton-Jacobi-Bellman (HJB) Partial Differential Equation (PDE) for the penalized problem, in an appropriate class, via an iterative monotonicity fixed-point method. We obtain appropriate bounds for the solution of the main HJB, and this allows us to derive bounds for the terminal inventory in terms of the penalization parameter. In particular, we can identify a range of values of the penalization parameter which guarantee that following the corresponding optimal trading rate leads to the execution of any prescribed portion of the trader's initial inventory. We also show that this strategy does not lead to speculative trading. Moreover, our method naturally yields a numerical method for computing the solution, which we implement to provide illustrations. Then, we use another monotonicity argument to establish that the singular limit, as the terminal penalization parameter goes to infinity, is the solution of the constrained problem, i.e., the one in which we require strict execution. 

The works that are closer to the present one in their modeling aspects are \cite{almgren2012optimal,graewe2018smooth,horst2020continuous}. In \cite{almgren2012optimal,horst2020continuous}, the authors only investigate the case of a linear temporary price impact per share, leading to quadratic objective functionals. The work \cite{graewe2018smooth} is based on criteria stemming from the power law hypothesis, as in the present work. Our results are in the line of those of \cite{graewe2018smooth,horst2020continuous}, but our proofs rely on techniques that are distinct to the ones employed there. In contradistinction to those papers, we analyze the singular problem as the limit of the regularized counterpart. Our approach has the benefit that we only require continuity of the coefficients to prove our existence and uniqueness result. Also, the method we employ yields a numerical algorithm, cf. \cite[Remark 2.9]{horst2020continuous}.

We organize the remainder of this paper as follows. In Section \ref{sec:TheModel} we describe the model in detail, and define the value functions for both problems, viz., the regularized and the constrained ones. In Section \ref{sec:AnalysisPDEs}, we make an ansatz leading to a semilinear HJB PDE for the corresponding value function. We observe that Section \ref{sec:AnalysisPDEs} is mainly technical, so that one who is not interested in the proofs can skip directly to its last subsection, where results are numerically assessed. We prove existence and uniqueness of a viscosity solution of it, in a suitable class. We also prove some properties of the optimal turnover rate, and present some numerical experiments. In Section \ref{sec:singular} we obtain the solution of the constrained problem as a singular limit of the penalized optimal strategies via a monotonicity argument. In Section \ref{sec:Conclusions} we present our conclusions.

We finish this introductory section by fixing some notations. Throughout the present work, we fix $T>0$ to represent our finite trading horizon. We denote by $\left(\Omega,\mathcal{F},\mathbb{F},\mathbb{P}\right)$ a complete filtered probability space, where $\mathbb{F} = \left\{ \mathcal{F}_t \right\}_{0\leqslant t \leqslant T}$ is a filtration satisfying the usual conditions, and $\mathcal{F}_T = \mathcal{F}.$ We also assume that $\mathbb{F}$ supports a one-dimensional Brownian motion $B,$ as well as an $m-$dimensional one $\textbf{W},$ for a fixed positive integer $m.$ All stochastic processes figuring throughout this work will be $\mathbb{F}-$adapted. We interpret $\mathbb{P}$ as the historical (or statistical) measure. For $0 \leqslant t \leqslant T,$ the set $\mathcal{U}_t$ comprises the $\mathbb{F}-$progressively measurable stochastic processes $\left\{\nu_u\right\}_{t\leqslant u \leqslant T}$ satisfying 
$$
\mathbb{E}\left[ \int_t^T \nu_u^2\,du\right] < \infty.
$$
Moreover, given a Markovian multidimensional process $\{x_u\}_{t\leqslant u \leqslant T},$ we write
$$
\mathbb{E}_{t,x}\left[\cdot \right] = \mathbb{E}\left[\cdot | x_t = x \right]. 
$$
The letter $C$ denotes a generic positive constant that may change from line to line. It will depend on all model parameters, unless we explicitly state otherwise.

\section{The model} \label{sec:TheModel}

\subsection{State processes}

For $t\in \left[0,T\right],$ let us consider an agent trading shares of a certain asset, during the time horizon $\left[t,T\right],$ with turnover rate $\{\nu_u\}_{t\leqslant u \leqslant T},$ i.e., $\nu_u$ is the instantaneous rate at which this trader negotiates at time $u.$ Thus, intuitively speaking, if $\nu_u > 0$ (resp., $\nu_u < 0$), then the agent is instantaneously buying (resp., selling) shares of the asset at time $u.$ Her inventory process $\{Q_u\}_{t\leqslant u \leqslant T}$ has dynamics
\begin{equation} \label{eq:InventoryProcess}
    \begin{cases}
        dQ^{\nu}_u = \nu_u\,du, \\
        Q^{\nu}_t = q.
    \end{cases}
\end{equation}
In this setting, we will assume that the agent undergoes an instantaneous temporary price impact per share $\lambda=\left\{ \lambda_u \right\}_{t \leqslant u \leqslant T}$ which is not necessarily linear on the agent's turnover rate; instead, we stipulate that $\lambda$ is an appropriate power of the absolute value of it. More precisely, we assume it takes the form
\begin{equation} \label{eq:nonlinear_temp_price_impact}
    \lambda_u = \kappa_u |\nu_u|^\phi.
\end{equation}
where the parameter $\phi \in \left]0,1\right]$ is exogenously given. This is a reasonable modelling from an empirical viewpoint, see \cite{almgren2005direct}. Other relevant works considering this form for the temporary price impact are \cite{graewe2018smooth,leal2020learning}, see also \cite[Section 6.7]{cartea2015algorithmic}. Moreover, we consider a stock whose price process $\{S_u\}_{t \leqslant u \leqslant T}$ evolves according to
\begin{equation} \label{eq:StockPriceProcess}
    \begin{cases}
        dS^{\nu}_u = \sigma_u\,dB_u,\\
        S^{\nu}_t = S.
    \end{cases}
\end{equation}
Above, the process $\{\sigma_u\}_{t\leqslant u \leqslant T}$ is the absolute volatility of the asset price. Therefore, in view of \eqref{eq:nonlinear_temp_price_impact}, the execution price $\widehat{S}^{\nu}$ the agent obtains is 
$$
\widehat{S}^{\nu}_u = S^{\nu}_u - \lambda_u \sign(\nu_u).
$$
Thus, the agent's cash process has dynamics
\begin{equation} \label{eq:CashProcess}
    \begin{cases}
        dX^\nu_u = -\widehat{S}^{\nu}_u \nu_u\,du = -S^\nu_u\nu_u\,du - \kappa_u |\nu_u|^{1+\phi} \,du,\\
        X^\nu_t = x,
    \end{cases}
\end{equation}

The book value of the agent's cash plus inventory at time $t,$ which we refer to as her wealth, is
\begin{equation} \label{eq:WealthProcess}
    w^{\nu}_u := X^{\nu}_u + Q^{\nu}_u S^{\nu}_u \hspace{1.0cm} (t \leqslant u \leqslant T).
\end{equation}
Using It\^o's formula, it follows that
\begin{align} \label{eq:WealthRep}
  \begin{split}
    w^{\nu}_T =&\, w^{\nu}_t + \int_t^T\left\{-(S^{\nu}_u \nu_u + \kappa_u|\nu_u|^{1+\phi}) \, du + S^{\nu}_u \nu_u\, du + Q^{\nu}_u \sigma_u\,dB_u \right\} \\
    =&\, w^{\nu}_t - \int_t^T \kappa_u |\nu_u|^{1+\phi} \, du + \int_t^T \sigma_u Q^{\nu}_u \, dB_u .
  \end{split}
\end{align}

From here on, we assume (with slight abuse of notations)
$$
\kappa_u = \kappa(\boldsymbol{y}_u) \text{ and } \sigma_u = \sigma(\boldsymbol{y}_u),
$$
where $\kappa,\,\sigma : \mathbb{R}^d \rightarrow \left[0,\infty\right[,$ for a positive integer $d,$ and that $\{\boldsymbol{y}_u\}_{t \leqslant u \leqslant T}$ is a $d-$dimensional Markov diffusion. More precisely, we suppose that there are functions $\boldsymbol{\alpha} : \mathbb{R}^d \rightarrow \mathbb{R}^d,$ $\boldsymbol{\beta} : \mathbb{R}^d \rightarrow \mathbb{R}^{d\times m},$ such that
\begin{equation} \label{eq:MarkovDiffusions}
  \begin{cases}
    d\boldsymbol{y}_u = \boldsymbol{\alpha}(\boldsymbol{y}_u)\,du + \boldsymbol{\beta}(\boldsymbol{y}_u)\,d\boldsymbol{W}_u,\\
    \boldsymbol{y}_t = \boldsymbol{y}.
  \end{cases}
\end{equation}

Henceforth, we make the subsequent hypotheses on the functions introduced above:
\begin{itemize}
    \item[\textbf{(H1)}] The functions $\boldsymbol{\alpha}$ and $\boldsymbol{\beta}$ are Lipschitz continuous.
    \item[\textbf{(H2)}] Both $\kappa$ and $\sigma$ are continuous functions and there are $\underline{\kappa},\overline{\kappa},\overline{\sigma}>0,$ $\underline{\sigma} \geqslant 0,$ such that $\overline{\kappa} \geqslant \kappa \geqslant \underline{\kappa}$ and $\overline{\sigma} \geqslant \sigma \geqslant \underline{\sigma}.$
\end{itemize}

\subsection{Performance criteria and the value function: the regularized problem} \label{subsec:PerfCritReg}

In Section \ref{sec:AnalysisPDEs}, we consider the regularized problem, i.e., the circumstance in which we do not require strict execution but penalize non-vanishing terminal inventory holdings. We intend to work under the dynamic programming paradigm of stochastic optimal control, leading us to the following definition.
\begin{definition}
Given $t \in \left[0,T\right],$ our performance assessment of a strategy $\nu \in \mathcal{U}_t$ is made via the criterion
\begin{align} \label{eq:PerfCrit}
  \begin{split}
    J^{\nu}(t,x,S,q,\boldsymbol{y}) :=&\, \mathbb{E}_{t,x,S,q,\boldsymbol{y}}\left[w^{\nu}_T - \left( x+qS \right) \right] - \mathbb{E}_{t,x,S,q,\boldsymbol{y}}\left[A\left|Q^\nu_T\right|^{1+\phi} +  \gamma \int_t^T \sigma^{1+\phi}_u\left|Q^{\nu}_u\right|^{1+\phi}\,du \right]  \\
    =&\, \mathbb{E}_{t,q,\boldsymbol{y}}\left[\int_t^T \left\{-\kappa(\boldsymbol{y}_u)\left|\nu_u\right|^{1+\phi} -  \gamma \sigma^{1+\phi}(\boldsymbol{y}_u)\left|Q^\nu_u\right|^{1+\phi}\right\}\,du - A\left|Q^\nu_T\right|^{1+\phi} \right],
  \end{split}
\end{align}
where $A>0$ is a constant.
\end{definition}
\begin{remark}
After taking the supremum over the admissible strategies $\nu$, as \eqref{eq:PerfCrit} suggests, the value function will not depend on the state variables $x$ and $S.$ Thus, from now on, we will use the slight abuse of notation $J^\nu(t,x,S,q,\boldsymbol{y}) = J^\nu(t,q,\boldsymbol{y}).$
\end{remark}
\begin{remark}
The assumption $\phi \in \left]0,1\right]$ ensures that $J^\nu$ is well-defined, for each $t \in \left[0,T\right[$ and $\nu \in \mathcal{U}_t.$
\end{remark}

For a given $\nu \in \mathcal{U}_t$, the criterion $J^\nu$ defined in \eqref{eq:PerfCrit} includes two parts. The first one comprises the expectation of the difference between the agent's terminal wealth, $w_T,$ and her initial cash plus the pre-trade price, $x+qS.$ Therefore, we take an IS viewpoint. Two penalization terms constitute the remaining part of $J^\nu:$ (i) The term proportional to $\left|Q_T^\nu\right|^{1+\phi},$ for ending up with terminal inventory; (ii) The integral $\int_t^T \sigma^{1+\phi}\left(\boldsymbol{y}_u\right)\left|Q^\nu_u\right|^{1+\phi}\,du,$ which represents a sense of urgency of the trader.\footnote{We choose the power $1+\phi$ here to make the functional $J$ homogeneous in the inventory state variable $q$, as we will later show. Other works such as \cite{graewe2018smooth,leal2020learning} proceed with their investigations in the same way.} We observe from the identity \eqref{eq:WealthRep} that the addition of the latter term is a natural risk management tool to control $\left\{ \int_t^s \sigma_u Q^\nu_u\,dB_u \right\}_{t \leqslant s \leqslant T},$ which is a source of uncertainty in the terminal wealth $w_T$ collected by the agent via following her strategy. In particular, when $\phi = 1,$ this recovers the popular mean-variance framework of \cite{almgren1999value}. Furthermore, in view of the form of the expectation of $w^\nu_T - w^\nu_t,$ it seems appropriate to consider the power $1+\phi$ as we do here (for both terms in (i) and (ii) we described above). We will show that these modeling choices do lead us to a dimensionality reduction, viz., they allow us to drop the dependence on the variable $q,$ in a precise sense that we will discuss briefly. See the works \cite{graewe2018smooth,leal2020learning} for a similar approach to related problems.

The parameter $A$ in \eqref{eq:PerfCrit} makes the trader tolerant for finishing the schedule with a nonzero inventory. Mathematically, it has the effect of regularizing the problem. In Section \ref{sec:singular}, we will establish that, as $A$ tends to infinity, the optimal strategy of the regularized problem converges, in an appropriate sense, to the solution of the one in which complete execution is required. However, prior to taking limits, we do obtain estimates on the remaining terminal inventory in terms of $A.$ Thus, we indicate how large a trader should choose $A$ to guarantee the execution of a given percentage of her initial inventory. The interpretation of the parameter $\gamma$ in \eqref{eq:PerfCrit} is that it represents the risk aversion of the agent. In the linear temporary price impact case, in which case $\phi =1,$ then we identify $2\gamma$ as the risk aversion parameter for a constant absolute risk aversion model, see \cite{cartea2015algorithmic,gueant2016financial}. We notice that for the same level of risk aversion $\gamma,$ the trader is more (less) urgent for higher (lower) variance levels. We remark that we can treat other forms of stochastic urgency parameters using the techniques of the present work, under mild assumptions. For concreteness, we proceed with the model we presented above.

Alongside (\textbf{H1}) and (\textbf{H2}), we make the following hypothesis on $A:$
\begin{itemize}
    \item[\textbf{(H3)}] The terminal penalization parameter satisfies $A > \left(\frac{\gamma \overline{\sigma}^{1+\phi} \overline{\kappa}^{\frac{1}{\phi}}}{\phi} \right)^{\frac{\phi}{\phi + 1}}.$
\end{itemize}
We remark that (\textbf{H3}) is convenient (mainly for notational purposes), but it is not a necessary assumption. In addition, since our main interest is the regime where $A$ is large, it is not restrictive.

We subtract the quantity $x + qS $ in (\ref{eq:PerfCrit}) envisaging to attain a dimensionality reduction. In analogy to what is exposed in \cite{gueant2016financial}, this can be interpreted as a comparison between our revenue from following strategy $\nu,$ during the time window $\left[t,T\right],$ and the book value of initial inventory, $x+qS.$ Therefore, we follow the IS paradigm by considering these performance criteria. In the sequel, we introduce our value function.

\begin{definition}
The value function $J$ is given by
\begin{equation} \label{eq:ValueFn}
    J(t,q,\boldsymbol{y}) := \sup_{\nu \in \mathcal{U}_t} J^\nu(t,q,\boldsymbol{y}) \hspace{1.0cm}\left( (t,q,\boldsymbol{y}) \in \left[0,T\right]\times \mathbb{R} \times \mathbb{R}^d \right).
\end{equation}
\end{definition}

\subsection{Performance criteria and the value function: the singular problem} \label{subsec:PerfCritSing}

In Section \ref{sec:singular}, we will be concerned with the analysis of the constrained problem:
\begin{equation} \label{eq:StrictOptm}
    \sup_{\nu \in \mathcal{U}_c} \left( J_\infty^\nu(q,\,\boldsymbol{y}) := \mathbb{E}_{0,\,q,\,\boldsymbol{y}}\left[-\int_0^T \left\{\kappa(\boldsymbol{y}_t)\left|\nu_t\right|^{1+\phi} + \gamma \sigma^{1+\phi}(\boldsymbol{y}_t)\left| Q^\nu_t\right|^{1+\phi} \right\}\,dt \right] \right).
\end{equation}
We call the problem ``constrained'' because we define the set of admissible controls $\mathcal{U}_c,$ figuring above, as
\begin{equation} \label{eq:ConstrainedSet}
    \mathcal{U}_c := \left\{ \left\{ \nu_t\right\}_{0\leqslant t \leqslant T} \in \mathbb{L}^{1+\phi} : \int_0^T \nu_t\,dt = -q,\, \mathbb{P}-a.s. \right\},
\end{equation}
where 
$$
\mathbb{L}^{1+\phi} := \left\{ \left\{ \nu_t \right\}_{0\leqslant t \leqslant T} :  \left\{ \nu_t\right\}_t \text{ is } \mathbb{F}-\text{progressively measurable, and } \mathbb{E}\left[ \int_0^T \left|\nu_t\right|^{1+\phi}\,dt \right] < \infty \right\}.
$$
That is, we stipulate in \eqref{eq:ConstrainedSet} an execution constraint. The performance criteria for the current problem are the functionals $J_{\infty}^\nu$ we defined in \eqref{eq:StrictOptm}. We remark that $J_{\infty}^\nu(q,\,\boldsymbol{y}) = J^\nu(0,\,q,\,\boldsymbol{y}),$ for $\nu \in \mathcal{U}_c.$ Moreover, we notice that, properly identifying processes of $\mathbb{L}^{1+\phi}$ which agree $dt\times d\mathbb{P}-$a.e.a.s., we can render this set into a Banach space by endowing it with the norm
$$
\| \nu \|_{1+\phi} := \mathbb{E}\left[ \int_0^T \left|\nu_t\right|^{1+\phi}\,dt \right]^{\frac{1}{1+\phi}}.
$$
In view of the form of our performance criteria, the membership in $\mathbb{L}^{1+\phi}$ provides the natural integrability condition for a solution of the problem \eqref{eq:StrictOptm}. The other constraint we placed in the definition of $\mathcal{U}_c$ in \eqref{eq:ConstrainedSet} means precisely that we are only interested in strategies guaranteeing the complete execution of the initial inventory.

\section{Analysis of the regularized problem} \label{sec:AnalysisPDEs}

\subsection{The Hamilton-Jacobi-Bellman equation}

From \cite[Theorem 4.3.1]{pham2009continuous}, we know that the value function $J$ is a viscosity solution of the Hamilton-Jacobi-Bellman (HJB) equation
$$
\partial_t J + \mathcal{L} J + \sup_{\nu}\left\{-\kappa\left|\nu\right|^{1+\phi} +\nu\partial_q J \right\} - \gamma \sigma^{1+\phi} \left|q\right|^{1+\phi} = 0,
$$
with $J(T,q,\boldsymbol{y}) = -A \left| q \right|^{1+\phi},$ where $\mathcal{L}$ is the infinitesimal generator of $\{\boldsymbol{y}_t\}_t,$
$$
\begin{cases}
    \mathcal{L} := \frac{1}{2}\tr\left( \boldsymbol{\beta} \boldsymbol{\beta}^\intercal (\boldsymbol{y}) D_{\boldsymbol{y}}^2 \right) + \boldsymbol{\alpha}(\boldsymbol{y})\cdot D_{\boldsymbol{y}},\\
    \text{for } \left( D_{\boldsymbol{y}} \right)_i := \partial_{y_i} \text{ and } \left( D_{\boldsymbol{y}} \right)_{ij} := \partial_{y_i} \partial_{y_j},\, i,j \in \left\{1,\ldots,d \right\}.
\end{cases}
$$
We have 
$$
\sup_{\nu}\left\{-\kappa\left|\nu\right|^{1+\phi} +\nu\partial_q J \right\} = \kappa \widetilde{H}\left(\frac{\partial_q J }{\kappa} \right),
$$
where
$$
 \widetilde{H}(p) = \phi\left( \frac{|p|}{1+\phi} \right)^{1+\frac{1}{\phi}},
$$
and the optimal control in feedback form is
\begin{equation} \label{eq:ControlFeedback1}
    \nu^*(t,q,\boldsymbol{y}) := \widetilde{H}^\prime\left( \frac{\partial_q J(t,q,\boldsymbol{y})}{\kappa(\boldsymbol{y})} \right).
\end{equation}
In this way, the HJB reads
\begin{equation} \label{eq:MainHJB}
    \partial_t J + \mathcal{L} J + \kappa \widetilde{H}\left(\frac{\partial_q J }{\kappa} \right) -  \gamma \sigma^{1+\phi}\left|q\right|^{1+\phi} = 0.
\end{equation}
We propose the ansatz 
\begin{equation} \label{eq:Ansatz}
    J(t,q,\boldsymbol{y}) = z(t,\boldsymbol{y})\left| q \right|^{1+\phi},
\end{equation}
which leads to the PDE
\begin{equation} \label{eq:ActualMainPDE}
    \begin{cases}
        \partial_t z + \mathcal{L}z + \kappa H\left( \frac{z}{\kappa} \right) - \gamma \sigma^{1+\phi} = 0, \\
        z(T,\boldsymbol{y}) = -A,
    \end{cases}
\end{equation}
where $H(p) = (1+\phi)^{1+\frac{1}{\phi}}\widetilde{H}(p) = \phi |p|^{1+\frac{1}{\phi}}.$ Arguing as in \cite[Lemma 2.7]{graewe2018smooth}, we can show that $z$ solves \eqref{eq:ActualMainPDE} if, and only if, $\left( t,\,q,\,\boldsymbol{y} \right) \mapsto z\left(t,\boldsymbol{y}\right)|q|^{1+\phi}$ solves \eqref{eq:MainHJB}. In the next Section, we turn to the analysis of \eqref{eq:ActualMainPDE}. In particular, we will show a verification result, guaranteeing that \eqref{eq:Ansatz} holds.

The most fundamental aspect of the analysis of the regularized problem is the investigation of the PDE \eqref{eq:ActualMainPDE}. Thus, we refer to this equation as our main PDE. The present Section is devoted to establishing the existence and uniqueness of a continuous and bounded viscosity solution of it.

\subsection{Some previous results} \label{subsec:HypothesesAndPrevResults}

The subsequent theorem will be key in the remainder of this section. It is a particular case of \cite[Theorem 3.42]{pardoux2014stochastic}.
\begin{theorem} \label{thm:Viscosity}
Under \textbf{(H1)} and \textbf{(H2)}, let $c,f:\left[0,T\right]\times\mathbb{R}^d \rightarrow \mathbb{R}$ and $g : \mathbb{R}^d \rightarrow \mathbb{R}$ be three continuous and bounded functions. Define
\begin{equation} \label{eq:FeynmanKac}
    h(t,\boldsymbol{y}) := \mathbb{E}_{t,\boldsymbol{y}}\left[\int_t^T e^{\int_t^u c(\tau,\boldsymbol{y}_\tau)\,d\tau}f(u,\boldsymbol{y}_u)\,du + e^{\int_t^T c(\tau,\boldsymbol{y}_\tau)\,d\tau}g(\boldsymbol{y}_T) \right].
\end{equation}
Then $h$ is continuous, and it is the unique viscosity solution of the PDE
\begin{equation} \label{eq:LinearPDE}
    \begin{cases}
        \partial_t h + \mathcal{L}h + ch + f = 0 &\text{ in } \left]0,T\right[\times \mathbb{R}^d, \\
        h|_{t=T} = g &\text{ in } \mathbb{R}^d,
    \end{cases}
\end{equation}
within the class $\mathcal{G}$ consisting of continuous functions satisfying
$$
\lim_{|\boldsymbol{y}|\rightarrow \infty} h(t,\boldsymbol{y})e^{-\delta\left[\log \left|\boldsymbol{y}\right|  \right]^2} = 0,
$$
for some $\delta > 0.$
\end{theorem}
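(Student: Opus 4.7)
The plan is to split the claim into two parts: (i) showing that $h$ defined by (\ref{eq:FeynmanKac}) is a continuous bounded viscosity solution of (\ref{eq:LinearPDE}) lying in $\mathcal{G}$, and (ii) establishing uniqueness within $\mathcal{G}$ via a comparison principle. Since the result is cited from Pardoux--R\u{a}\c{s}canu, the idea is to outline the essential ingredients rather than reprove every technical step.

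For existence, continuity of $h$ on $[0,T]\times\mathbb{R}^d$ would follow from \textbf{(H1)}, which yields the standard $L^2$ continuous dependence of the SDE (\ref{eq:MarkovDiffusions}) on its initial datum $(t,\boldsymbol{y})$, combined with dominated convergence applied to (\ref{eq:FeynmanKac}) using the boundedness of $c$, $f$, $g$. Boundedness is immediate: $|h|\leqslant (\|f\|_\infty T + \|g\|_\infty)e^{\|c\|_\infty T}$, so in particular $h\in\mathcal{G}$. To obtain the viscosity property I would first derive, via the Markov property of $\{\boldsymbol{y}_u\}$, the flow identity
\begin{equation*}
h(t,\boldsymbol{y}) = \mathbb{E}_{t,\boldsymbol{y}}\left[\int_t^s e^{\int_t^u c(\tau,\boldsymbol{y}_\tau)\,d\tau}f(u,\boldsymbol{y}_u)\,du + e^{\int_t^s c(\tau,\boldsymbol{y}_\tau)\,d\tau}h(s,\boldsymbol{y}_s)\right],
\end{equation*}
valid for $t<s\leqslant T$. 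Given a smooth test function $\varphi$ touching $h$ from above at some $(t_0,\boldsymbol{y}_0)$, I would apply It\^o's formula to $u\mapsto e^{\int_{t_0}^u c\,d\tau}\varphi(u,\boldsymbol{y}_u)$ on $[t_0,t_0+\varepsilon]$, take expectations against $\mathbb{E}_{t_0,\boldsymbol{y}_0}$, use the flow identity with $s=t_0+\varepsilon$, divide by $\varepsilon$, and let $\varepsilon\downarrow 0$. The inequality $\varphi\geqslant h$ then yields the subsolution inequality $\partial_t\varphi + \mathcal{L}\varphi + c\varphi + f\geqslant 0$ at $(t_0,\boldsymbol{y}_0)$; the supersolution inequality is symmetric.

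For uniqueness, it suffices to prove a comparison principle within $\mathcal{G}$: if $u$ is a subsolution and $v$ a supersolution with $u(T,\cdot)\leqslant v(T,\cdot)$, then $u\leqslant v$. The standard doubling-of-variables argument in an unbounded domain requires a penalization that is compatible with the growth class, so I would work with a weight of the form $\psi(\boldsymbol{y}) = \exp\!\bigl(\delta'[\log(|\boldsymbol{y}|^2+2)]^2\bigr)$ with $\delta'>\delta$ chosen so that the Lipschitz bounds on $\boldsymbol{\alpha}$ and $\boldsymbol{\beta}$ from \textbf{(H1)} yield $\mathcal{L}\psi\leqslant \lambda\psi$ for some constant $\lambda$. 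Introducing $\widetilde{u}(t,\boldsymbol{y}) := u(t,\boldsymbol{y}) - \eta\, e^{\lambda(T-t)}\psi(\boldsymbol{y})$ for small $\eta>0$, the defining property of $\mathcal{G}$ forces any maximum of $\widetilde{u}-v$ to be attained on a compact subset of $[0,T]\times\mathbb{R}^d$; on that compact set, the usual Ishii-lemma/doubling-of-variables machinery applies and gives $\widetilde{u}\leqslant v$, after which $\eta\downarrow 0$ concludes.

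The main obstacle is this uniqueness step: the weight $\psi$ must simultaneously dominate the growth allowed in $\mathcal{G}$, be controlled by $\mathcal{L}$ (which is where the choice $[\log|\boldsymbol{y}|]^2$ rather than $|\boldsymbol{y}|^p$ becomes essential, since only linear growth of $\boldsymbol{\alpha},\boldsymbol{\beta}$ is assumed), and remain compatible with the doubling-of-variables test function $\tfrac{1}{2\varepsilon}|\boldsymbol{y}-\boldsymbol{y}'|^2$. Existence, by contrast, is a rather routine consequence of the Markov property and It\^o calculus. For a fully rigorous treatment I would invoke \cite[Theorem 3.42]{pardoux2014stochastic} directly, as done in the statement.
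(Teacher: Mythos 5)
The paper itself offers no proof of this statement: it is imported verbatim as a particular case of Theorem 3.42 of Pardoux--R\u{a}\c{s}canu, so your closing decision to invoke that reference is exactly what the paper does, and your existence sketch (boundedness of $h$, continuity via $L^2$-continuity of the flow under \textbf{(H1)} plus dominated convergence, and the sub/supersolution inequalities obtained from the Markov-property flow identity and It\^o's formula applied to $u\mapsto e^{\int_{t_0}^u c\,d\tau}\varphi(u,\boldsymbol{y}_u)$) is the standard and correct route to that half of the result.

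There is, however, a genuine flaw in your uniqueness sketch as written. For $\psi(\boldsymbol{y})=\exp\bigl(\delta'[\log(|\boldsymbol{y}|^2+2)]^2\bigr)$ one has $|\nabla\psi|\lesssim \frac{\log(|\boldsymbol{y}|^2+2)}{|\boldsymbol{y}|}\,\psi$ and $|D^2\psi|\lesssim \frac{[\log(|\boldsymbol{y}|^2+2)]^2}{|\boldsymbol{y}|^2}\,\psi$, so with the linear growth of $\boldsymbol{\alpha}$ and $\boldsymbol{\beta}$ granted by \textbf{(H1)} the best one gets is $\mathcal{L}\psi\lesssim [\log(|\boldsymbol{y}|^2+2)]^2\,\psi$, which is \emph{not} bounded by $\lambda\psi$ for any constant $\lambda$; consequently the barrier $\eta\,e^{\lambda(T-t)}\psi(\boldsymbol{y})$ does not satisfy the supersolution-type inequality you need, and the penalized comparison argument does not close. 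The standard repair (and the one underlying the cited Pardoux--R\u{a}\c{s}canu proof) is to put the time dependence inside the exponent, e.g.\ $\chi(t,\boldsymbol{y})=\exp\bigl\{(\lambda(T-t)+\delta')[\log(|\boldsymbol{y}|^2+2)]^2\bigr\}$, so that $\partial_t\chi$ contributes $-\lambda[\log(|\boldsymbol{y}|^2+2)]^2\chi$, which for $\lambda$ large (depending on the Lipschitz constants and $\|c\|_\infty$) absorbs the logarithmically growing spatial terms; with that weight the localization to a compact set and the doubling-of-variables argument go through as you describe. Since the paper leans entirely on the citation, this does not affect the paper, but your sketch should either adopt the time-dependent exponent or state the comparison step as quoted from the reference rather than claim $\mathcal{L}\psi\leqslant\lambda\psi$.
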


In effect, from the representation (\ref{eq:FeynmanKac}), the property of comparison holds. We state it in Corollary \ref{cor:Comparison}.

\begin{corollary} \label{cor:Comparison}
Let $c,f,\widetilde{f} : \left[0,T\right]\times \mathbb{R}^d \rightarrow \mathbb{R},$ $g,\widetilde{g} : \mathbb{R}^d \rightarrow \mathbb{R}$ be five bounded continuous functions. Define $h$ as in (\ref{eq:FeynmanKac}), and likewise $\widetilde{h},$ the latter having $\widetilde{f}$ and $\widetilde{g}$ in place of $f$ and $g,$ respectively. If $f \geqslant \widetilde{f}$ and $g\geqslant \widetilde{g},$ then $h \geqslant \widetilde{h}.$
\end{corollary}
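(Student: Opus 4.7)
The plan is to exploit the explicit Feynman--Kac representation (\ref{eq:FeynmanKac}) provided by Theorem \ref{thm:Viscosity}: since $h$ and $\widetilde{h}$ are defined by the same formula, with the same coefficient $c$ and the same driver $\{\boldsymbol{y}_u\}$, only the source $f$ and the terminal datum $g$ differ between them. This linearity with respect to $(f,g)$ is what will produce comparison for free.

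Concretely, I would subtract the two representations under the same expectation, writing
\[
h(t,\boldsymbol{y}) - \widetilde{h}(t,\boldsymbol{y}) = \mathbb{E}_{t,\boldsymbol{y}}\!\left[\int_t^T e^{\int_t^u c(\tau,\boldsymbol{y}_\tau)\,d\tau}\bigl(f-\widetilde{f}\bigr)(u,\boldsymbol{y}_u)\,du + e^{\int_t^T c(\tau,\boldsymbol{y}_\tau)\,d\tau}\bigl(g-\widetilde{g}\bigr)(\boldsymbol{y}_T)\right].
\]
Here I use that, by (\textbf{H1})--(\textbf{H2}) and the boundedness of $c$, the exponential weights $e^{\int_t^u c(\tau,\boldsymbol{y}_\tau)\,d\tau}$ are strictly positive and integrable, so the subtraction under the expectation is legitimate (both $h$ and $\widetilde{h}$ are finite). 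The assumptions $f \geqslant \widetilde{f}$ and $g \geqslant \widetilde{g}$ make the integrand pathwise non-negative, and monotonicity of the expectation yields $h \geqslant \widetilde{h}$ pointwise.

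There is essentially no obstacle: the entire argument boils down to positivity of the exponential weight together with linearity of the representation in $(f,g)$. I would not invoke a PDE-level comparison principle here, precisely because the point of the corollary is that Theorem \ref{thm:Viscosity} hands us the probabilistic formula, from which comparison is immediate and avoids any viscosity-solution machinery.
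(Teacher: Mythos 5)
Your argument is correct and is exactly the paper's route: the paper deduces the corollary directly from the representation (\ref{eq:FeynmanKac}), i.e.\ positivity of the exponential weight plus linearity of the formula in $(f,g)$ and monotonicity of the expectation. Nothing further is needed.
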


\subsection{Finding a subsolution and a supersolution} \label{subsec:SubSuper}

Let us introduce the operators
$$
\begin{cases}
\mathcal{A}(h) := \partial_t h + \mathcal{L}h + \kappa H\left( h / \kappa \right) - \gamma \sigma^{1+\phi} = \partial_t h + \mathcal{L}h + \phi \kappa^{-\frac{1}{\phi}}|h|^{1+\frac{1}{\phi}} -  \gamma \sigma^{1+\phi},\\
\mathcal{\underline{A}}(h) := \partial_t h + \mathcal{L}h + \phi\overline{\kappa}^{-\frac{1}{\phi}}|h|^{1+\frac{1}{\phi}} -  \gamma\overline{\sigma}^{1+\phi} , \\
\overline{\mathcal{A}}(h) := \partial_t h + \mathcal{L}h + \phi \underline{\kappa}^{-\frac{1}{\phi}}|h|^{1+\frac{1}{\phi}} -  \gamma\underline{\sigma}^{1+\phi} .
\end{cases}
$$
We observe that 
\begin{equation} \label{eq:OrdOperators}
    \overline{\mathcal{A}}(h) \geqslant \mathcal{A}(h) \geqslant \underline{\mathcal{A}}(h).
\end{equation}

We will use the above operators to build a subsolution and a supersolution to \eqref{eq:ActualMainPDE}, which will help us to prove the well-posedness of the latter PDE. These constructions will rely upon the following result on a class of Ordinary Differential Equations (ODEs).



We observe that, for each $a,b > 0$ and $r > 1$ subject to $b A^r - a > 0,$ the scalar initial value problem
$$
\begin{cases}
y^\prime = a - b|y|^r,\\
y(T) = -A,
\end{cases}
$$
admits a unique classical solution $y$ on $\left[0,T\right].$ Moreover, $y$ is monotone decreasing and 
\begin{equation} \label{eq:Inequalities}
    -A \leqslant y < -(a/b)^{1/r}.
\end{equation}
In effect, $y$ is given by $y(t) = F^{-1}(T-t),$ for the bijective differentiable mapping
$$
F : \xi \in \left[-A, -\left(a/b\right)^{1/r} \right[ \mapsto -\int_{-A}^\xi \frac{du}{a-b|u|^r} \in \left[0,\infty\right[,
$$
which satisfies $F^\prime > 0.$ Thus, choosing $a$ and $b$ suitably, and $r := 1+ 1/\phi,$ we infer that there are two differentiable deterministic functions $\underline{z},\,\overline{z}:\left[0,T\right] \rightarrow \left[-A,0\right[$ (independent of the state variable $\boldsymbol{y} \in \mathbb{R}^d$) solving
\begin{equation} \label{eq:SubSolution}
  \begin{cases}    
    \mathcal{\underline{A}}\left( \underline{z} \right) = 0, \\
    \underline{z}(T) = -A,
  \end{cases}
\end{equation}
and
\begin{equation} \label{eq:SuperSolution}
  \begin{cases}
    \overline{\mathcal{A}}\left(\overline{z}\right) =0,\\
    \overline{z}(T) = -A.
  \end{cases}
\end{equation}
Furthermore, they are subject to the bounds
\begin{equation} \label{eq:BoundsOnSubSuper}
    -A \leqslant \underline{z} \leqslant \overline{z} \leqslant -\left(\frac{\gamma \underline{\sigma}^{1+\phi} \underline{\kappa}^{\frac{1}{\phi}}}{\phi} \right)^{\frac{\phi}{\phi + 1}} \leqslant 0.
\end{equation}
The inequalities in the extremes of \eqref{eq:BoundsOnSubSuper} are straightforward to derive from \eqref{eq:Inequalities}, and we can show the one in the middle by standard ODE comparison arguments. We can also show that there exists a positive constant $C,$ independent of $A,$ such that
\begin{equation} \label{eq:SubAndSuperBlowupRate}
    \frac{1}{C(A^{-\frac{1}{\phi}} + T-t)^\phi} \leqslant |\underline{z}(t)| \leqslant \frac{C}{(A^{-\frac{1}{\phi}} + T-t)^\phi} \text{ and } \frac{1}{C(A^{-\frac{1}{\phi}} + T-t)^\phi} \leqslant |\overline{z}(t)| \leqslant \frac{C}{(A^{-\frac{1}{\phi}} + T-t)^\phi},
\end{equation}
for $0 \leqslant t \leqslant T,$ see \ref{app:BlowupRateOfSubAndSuper}. In the next Subsection, we will find the solution $z$ of \eqref{eq:ActualMainPDE}, in an appropriate sense, subject to $\underline{z} \leqslant z \leqslant \overline{z}.$ Intuitively, this is coherent with a comparison principle, cf. \eqref{eq:OrdOperators}.

\subsection{Existence and uniqueness properties of the main PDE}

We obtain existence and uniqueness results for (\ref{eq:ActualMainPDE}) through an iterative monotonicity method. For a description of this approach in other contexts, we refer to \cite[Chapter 7]{pao2012nonlinear} and \cite[Chapter 12]{wu2006elliptic}. Here, we apply this technique in the setting of viscosity solutions with milder hypotheses on model coefficients.

The first step is to define the bounded continuous coefficient $c,$ 
$$
c: = -\left( \phi + 1 \right) \left( \frac{\left| \underline{z} \right|}{\kappa} \right)^{\frac{1}{\phi}} = -\left( \phi + 1 \right) \left( - \frac{ \underline{z} }{\kappa} \right)^{\frac{1}{\phi}},
$$
and designate by $\mathcal{A}_c$ the operator
$$
\mathcal{A}_c : h \mapsto \partial_th + \mathcal{L}h + ch.
$$
Instead of solving (\ref{eq:ActualMainPDE}), we will solve the equivalent problem
\begin{equation} \label{eq:EquivMainPDE}
  \begin{cases}
    \mathcal{A}_c z + \kappa H\left( z/ \kappa \right) - \gamma \sigma^{1+\phi} - cz = 0 &\text{ in } \left]0,T\right[\times \mathbb{R}^d,\\
    z|_{t=T} = -A &\text{ in } \mathbb{R}^d.
  \end{cases}
\end{equation}

\begin{lemma} \label{lem:Step1}
Let $\underline{z}^{(1)},\, \overline{z}^{(1)} \in \mathcal{G}$ be the viscosity solutions of the PDEs
$$
\begin{cases}
    \mathcal{A}_c\underline{z}^{(1)} +  \underline{f}^{(1)} = 0  &\text{ in } \left]0,T\right[\times \mathbb{R}^d,\\
    \underline{z}^{(1)}|_{t=T} = -A &\text{ in } \mathbb{R}^d,
\end{cases}
$$
$$
\begin{cases}
    \mathcal{A}_c\overline{z}^{(1)} + \overline{f}^{(1)} = 0  &\text{ in } \left]0,T\right[\times\mathbb{R}^d, \\
    \overline{z}^{(1)}|_{t=T} = -A &\text{ in } \mathbb{R}^d,
\end{cases}
$$
where
$$
\underline{f}^{(1)} := - \gamma \sigma^{1+\phi} + \phi \kappa^{-\frac{1}{\phi}}\left| \underline{z} \right|^{1+\frac{1}{\phi}} - c\underline{z}.
$$
and
$$
\overline{f}^{(1)} := - \gamma \sigma^{1+\phi} + \phi \kappa^{-\frac{1}{\phi}}\left|\overline{z}\right|^{1+\frac{1}{\phi}} - c\overline{z}
$$
Then,
\begin{equation} \label{eq:StepOneViscMainPDE}
    \underline{z} \leqslant \underline{z}^{(1)} \leqslant \overline{z}^{(1)} \leqslant \overline{z}.
\end{equation}
\end{lemma}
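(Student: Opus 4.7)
The plan is to deduce all three inequalities in \eqref{eq:StepOneViscMainPDE} from the comparison principle (Corollary \ref{cor:Comparison}) applied to the \emph{linear} operator $\mathcal{A}_c$, after rewriting each of $\underline{z}$, $\overline{z}$ as Feynman--Kac solutions of a linear problem driven by $\mathcal{A}_c$. First, observe that since $\underline{z}$ and $\overline{z}$ are bounded continuous functions of $t$ alone, the coefficient $c$ and the forcings $\underline{f}^{(1)}$, $\overline{f}^{(1)}$ are bounded continuous on $[0,T]\times\mathbb{R}^d$ thanks to (H2); hence Theorem \ref{thm:Viscosity} applies and gives continuous bounded viscosity solutions $\underline{z}^{(1)},\overline{z}^{(1)}\in\mathcal{G}$ represented by (\ref{eq:FeynmanKac}).

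For the extreme inequalities, note that $\underline{z}$ is a classical (hence viscosity) solution of $\mathcal{A}_c \underline{z} + g=0$, where $g := -\partial_t \underline{z} - c\underline{z}$, and (\ref{eq:SubSolution}) gives $\partial_t \underline{z} = \gamma \overline{\sigma}^{1+\phi} - \phi \overline{\kappa}^{-\frac{1}{\phi}}|\underline{z}|^{1+\frac{1}{\phi}}$, so
$$
g = -\gamma \overline{\sigma}^{1+\phi} + \phi \overline{\kappa}^{-\frac{1}{\phi}}|\underline{z}|^{1+\frac{1}{\phi}} - c\underline{z}.
$$
The terminal data of $\underline{z}$ and $\underline{z}^{(1)}$ both equal $-A$, and
$$
\underline{f}^{(1)} - g = \gamma\bigl(\overline{\sigma}^{1+\phi} - \sigma^{1+\phi}\bigr) + \phi\bigl(\kappa^{-\frac{1}{\phi}} - \overline{\kappa}^{-\frac{1}{\phi}}\bigr)|\underline{z}|^{1+\frac{1}{\phi}}\geqslant 0
$$
by (H2). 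Corollary \ref{cor:Comparison} then yields $\underline{z}\leqslant \underline{z}^{(1)}$. The same argument, applied to $\overline{z}$ using (\ref{eq:SuperSolution}) and the inequalities $\sigma\geqslant\underline{\sigma}$, $\kappa\geqslant\underline{\kappa}$, produces $\overline{z}^{(1)}\leqslant \overline{z}$.

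The middle inequality $\underline{z}^{(1)}\leqslant\overline{z}^{(1)}$ is the crux and it motivates the specific choice of $c$. Setting $p(w) := \phi \kappa^{-\frac{1}{\phi}}|w|^{1+\frac{1}{\phi}}$ for $w\leqslant 0$, one computes $p'(w) = -(\phi+1)\kappa^{-\frac{1}{\phi}}|w|^{\frac{1}{\phi}}$, so $c = p'(\underline{z})$ is precisely the derivative of the nonlinearity at the sub-solution. Because $\phi\in\,]0,1]$, $p$ is convex on $(-\infty,0]$, so $w\mapsto p(w)-cw$ is nondecreasing on $[\underline{z},0]$; since $\underline{z}\leqslant \overline{z}\leqslant 0$ by \eqref{eq:BoundsOnSubSuper}, this gives
$$
\overline{f}^{(1)} - \underline{f}^{(1)} = \bigl[p(\overline{z}) - c\overline{z}\bigr] - \bigl[p(\underline{z}) - c\underline{z}\bigr] \geqslant 0.
$$
Since the terminal data of $\underline{z}^{(1)}$ and $\overline{z}^{(1)}$ coincide, Corollary \ref{cor:Comparison} yields the claim.

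The conceptually delicate step is the last one: a naive linearization would not grant the needed monotonicity of the forcing along the iterates. The point is that taking $c = p'(\underline{z})$ rather than, say, $p'(\overline{z})$, exploits convexity to turn the supporting tangent at the lowest admissible value into a global minorant of $p'$ on $[\underline{z},0]$, which is exactly what makes $w \mapsto p(w) - cw$ monotone on the order interval in which the iteration lives. The remaining verifications (boundedness and continuity of $c$, $\underline{f}^{(1)}$, $\overline{f}^{(1)}$; membership of $\underline{z}^{(1)},\overline{z}^{(1)}$ in the uniqueness class $\mathcal{G}$) are routine given (H2) and the bounds \eqref{eq:BoundsOnSubSuper} on $\underline{z},\overline{z}$.
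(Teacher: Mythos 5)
Your proposal is correct and follows essentially the same route as the paper: both recast $\underline{z}$ and $\overline{z}$ as solutions of linear problems driven by $\mathcal{A}_c$ (reading $-c\underline{h}$, $-c\overline{h}$ in the statement as $-c\underline{z}$, $-c\overline{z}$) and then compare forcings via Corollary \ref{cor:Comparison}, using $\sigma\leqslant\overline{\sigma}$, $\kappa\leqslant\overline{\kappa}$ (resp. $\sigma\geqslant\underline{\sigma}$, $\kappa\geqslant\underline{\kappa}$) for the outer inequalities. Your monotonicity-of-$w\mapsto p(w)-cw$ argument for the middle inequality is just a rephrasing of the paper's supporting-tangent estimate exploiting $c=p'(\underline{z})$ and the convexity of $w\mapsto|w|^{1+\frac{1}{\phi}}$ (which holds for every $\phi>0$, not only $\phi\in\left]0,1\right]$), so no genuinely different ideas are involved.
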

\begin{proof}
We notice that the functions $\underline{z}$ and $\overline{z}$ (see Section \ref{subsec:SubSuper}) solve
$$
\begin{cases}
    \mathcal{A}_c \underline{z} + \underline{f}= 0,\\
    \mathcal{A}_c \overline{z} + \overline{f} = 0, \\
    \underline{z}|_{t=T} = -A = \overline{z}|_{t=T},
\end{cases}
$$
where
$$
\begin{cases}
    \underline{f} := - \gamma \overline{\sigma}^{1+\phi} + \phi \overline{\kappa}^{-\frac{1}{\phi}}\left| \underline{z} \right|^{1+\frac{1}{\phi}} - c\underline{z},\\
    \overline{f} := - \gamma\underline{\sigma}^{1+\phi} + \phi \underline{\kappa}^{-\frac{1}{\phi}}\left|\overline{z}\right|^{1+\frac{1}{\phi}} - c\overline{z}.
\end{cases}
$$
From the relations
$$
\underline{f}^{(1)} - \underline{f} =  \gamma (\overline{\sigma}^{1+\phi} - \sigma^{1+\phi}) + \phi\left( \kappa^{-\frac{1}{\phi}} - \overline{\kappa}^{-\frac{1}{\phi}} \right) \left| \underline{z}\right|^{1+\frac{1}{\phi}} \geqslant 0,
$$
and
$$
\underline{z}^{(1)}|_{t=T} = \overline{z}^{(1)}|_{t=T},
$$
we conclude through Corollary \ref{cor:Comparison} that
$$
\underline{z} \leqslant \underline{z}^{(1)}.
$$
Likewise, from the inequalities
$$
\overline{f}^{(1)} - \overline{f} = \gamma (\underline{\sigma}^{1+\phi} - \sigma^{1+\phi}) + \phi \left( \kappa^{-\frac{1}{\phi}} - \underline{\kappa}^{-\frac{1}{\phi}}\right)\left| \overline{z} \right|^{1+\frac{1}{\phi}} \leqslant 0
$$
and
\begin{align*}
    \overline{f}^{(1)} - \underline{f}^{(1)} &= \phi \kappa^{-\frac{1}{\phi}} \left( \left| \overline{z} \right|^{1+\frac{1}{\phi}} - \left| \underline{z} \right|^{1+\frac{1}{\phi}} \right) - c \left( \overline{z} - \underline{z} \right) \\
    &\geqslant \left[ - (\phi + 1) \kappa^{-\frac{1}{\phi}} \left| \underline{z} \right|^{\frac{1}{\phi}} - c \right] \left(\overline{z} - \underline{z}\right) \\
    &= 0,
\end{align*}
alongside the fact that $\underline{z}|_{t=T} = \overline{z}|_{t=T},$ we deduce that we can apply Corollary \ref{cor:Comparison} to deduce the other two inequalities in (\ref{eq:StepOneViscMainPDE}).
\end{proof}

\begin{lemma} \label{lem:Step2}
We set $\underline{z}^{(0)} := \underline{z}$ and $\overline{z}^{(0)} := \overline{z}.$ For some $k\geqslant 1,$ we assume that there are functions $\left\{\underline{z}^{(l)},\, \overline{z}^{(l)}\right\}_{l=0}^k \subseteq \mathcal{G}$ solving the PDEs 
$$
\begin{cases}
    \mathcal{A}_c \underline{z}^{(l)} + \underline{f}^{(l)} = 0 &\text{ in } \left]0,T\right[\times \mathbb{R}^d,\\
    \text{where } \underline{f}^{(l)} := - \gamma \sigma^{1+\phi} + \phi \kappa^{-\frac{1}{\phi}}\left|\underline{z}^{(l-1)}\right|^{1+\frac{1}{\phi}} - c\underline{z}^{(l-1)},\\
    \underline{z}^{(l)} = -A &\text{ in } \mathbb{R}^d,
\end{cases}
$$
$$
\begin{cases}
    \mathcal{A}_c \overline{z}^{(l)} + \overline{f}^{(l)} = 0 &\text{ in } \left]0,T\right[\times \mathbb{R}^d,\\
    \text{where } \overline{f}^{(l)} := - \gamma \sigma^{1+\phi} + \phi \kappa^{-\frac{1}{\phi}}\left|\overline{z}^{(l-1)}\right|^{1+\frac{1}{\phi}} - c\overline{z}^{(l-1)},\\
    \overline{z}^{(l)} = -A &\text{ in } \mathbb{R}^d,
\end{cases}
$$
in the viscosity sense, for $1 \leqslant l \leqslant k,$ and satisfying
$$
\underline{z} = \underline{z}^{(0)} \leqslant \cdots \leqslant \underline{z}^{(k-1)}\leqslant \underline{z}^{(k)} \leqslant \overline{z}^{(k)} \leqslant \overline{z}^{(k-1)} \leqslant \cdots \leqslant \overline{z}^{(0)} = \overline{z}.
$$
Then, considering the viscosity solutions of $\underline{z}^{(k+1)},\, \overline{z}^{(k+1)} \in \mathcal{G}$ of
$$
\begin{cases}
    \mathcal{A}_c\underline{z}^{(k+1)} + \underline{f}^{(k+1)} = 0, &\text{ in } \left]0,T\right[\times\mathbb{R}^d, \\
    \text{where } \underline{f}^{(k+1)} := - \gamma \sigma^{1+\phi} + \phi \kappa^{-\frac{1}{\phi}}\left|\underline{z}^{(k)}\right|^{1+\frac{1}{\phi}} - c\underline{z}^{(k)},\\
    \underline{z}^{(k+1)}|_{t=T}= -A &\text{ in } \mathbb{R}^d,
\end{cases}
$$
and
$$
\begin{cases}
    \mathcal{A}_c\overline{z}^{(k+1)} + \overline{f}^{(k+1)} = 0  &\text{ in } \left]0,T\right[\times\mathbb{R}^d, \\
    \text{where } \overline{f}^{(k+1)} := - \gamma \sigma^{1+\phi} + \phi \kappa^{-\frac{1}{\phi}}\left|\overline{z}^{(k)}\right|^{1+\frac{1}{\phi}} - c\underline{z}^{(k)},\\
    \overline{z}^{(k+1)}|_{t=T} = -A &\text{ in } \mathbb{R}^d,
\end{cases}
$$
we have
$$
\underline{z}^{(k)} \leqslant \underline{z}^{(k+1)} \leqslant  \overline{z}^{(k+1)} \leqslant \overline{z}^{(k)} .
$$
\end{lemma}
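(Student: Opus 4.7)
The plan is to reduce the three required inequalities to three applications of Corollary \ref{cor:Comparison}, made possible by the deliberate choice of $c$. The key observation is that at each $(t,\boldsymbol{y})$ the function
$$
\psi_{t,\boldsymbol{y}}(y) := \phi\,\kappa(\boldsymbol{y})^{-1/\phi}|y|^{1+1/\phi} - c(t,\boldsymbol{y})\,y
$$
satisfies, for $y \leqslant 0$,
$$
\psi'_{t,\boldsymbol{y}}(y) = (\phi+1)\,\kappa(\boldsymbol{y})^{-1/\phi}\bigl(|\underline{z}(t)|^{1/\phi} - |y|^{1/\phi}\bigr) \geqslant 0 \quad \text{on } [\underline{z}(t),0].
$$
The inductive bracketing $\underline{z} \leqslant \underline{z}^{(j)} \leqslant \overline{z}^{(j)} \leqslant \overline{z} \leqslant 0$ for every $j \leqslant k$ ensures that all iterates in sight lie in this monotonicity interval, so differences of $\psi$ applied to ordered iterates are automatically signed.

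Next I would rewrite the sources compactly as $\underline{f}^{(k+1)} = -\gamma\sigma^{1+\phi} + \psi(\underline{z}^{(k)})$ and $\overline{f}^{(k+1)} = -\gamma\sigma^{1+\phi} + \psi(\overline{z}^{(k)})$, and analogously for step $k$. Since $\underline{z}^{(k)}, \overline{z}^{(k)} \in \mathcal{G}$ are continuous and uniformly bounded in $[-A,0]$, and $c,\kappa,\sigma$ are bounded and continuous by (\textbf{H2}) and the definition of $c$, both sources are bounded continuous. Theorem \ref{thm:Viscosity} then produces unique viscosity solutions $\underline{z}^{(k+1)}, \overline{z}^{(k+1)} \in \mathcal{G}$ with Feynman--Kac representations. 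All four PDEs in the argument share the linear operator $\mathcal{A}_c$ and the terminal value $-A$, so solution ordering reduces to source ordering via Corollary \ref{cor:Comparison}.

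The three comparisons now fall out mechanically. From $\underline{z}^{(k)} \geqslant \underline{z}^{(k-1)}$ and the monotonicity of $\psi$ I obtain $\underline{f}^{(k+1)} - \underline{f}^{(k)} = \psi(\underline{z}^{(k)}) - \psi(\underline{z}^{(k-1)}) \geqslant 0$, hence $\underline{z}^{(k+1)} \geqslant \underline{z}^{(k)}$; symmetrically $\overline{z}^{(k)} \leqslant \overline{z}^{(k-1)}$ gives $\overline{z}^{(k+1)} \leqslant \overline{z}^{(k)}$; and $\underline{z}^{(k)} \leqslant \overline{z}^{(k)}$ gives $\underline{f}^{(k+1)} \leqslant \overline{f}^{(k+1)}$ and therefore $\underline{z}^{(k+1)} \leqslant \overline{z}^{(k+1)}$. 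The only genuinely substantive step is the monotonicity of $\psi$ on $[\underline{z}(t),0]$, which is precisely what motivates defining $c$ in terms of $\underline{z}$ rather than $\overline{z}$: this choice yields the widest possible interval on which the quasilinear term behaves monotonically, and this interval is exactly wide enough to contain every iterate produced by the scheme, as guaranteed by the induction. No compactness or Arzel\`a--Ascoli type considerations enter here; those will only be needed to pass to the limit $k\to\infty$.
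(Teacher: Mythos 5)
Your proof is correct and follows essentially the same route as the paper: the paper also reduces the three inequalities to Corollary \ref{cor:Comparison} by checking the signs of $\underline{f}^{(k+1)}-\underline{f}^{(k)}$, $\overline{f}^{(k+1)}-\underline{f}^{(k+1)}$ and $\overline{f}^{(k+1)}-\overline{f}^{(k)}$, and its convexity estimate with the factor $\bigl[-(\phi+1)\kappa^{-1/\phi}|\underline{z}^{(k-1)}|^{1/\phi}-c\bigr]$ is exactly your observation that $y\mapsto \phi\kappa^{-1/\phi}|y|^{1+1/\phi}-cy$ is nondecreasing on $[\underline{z},0]$, since all iterates stay above $\underline{z}$.
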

\begin{proof}
Under the present assumptions, we have $\overline{z}^{(l)} \geqslant \underline{z}$ and $\underline{z}^{(l)} \geqslant \underline{z},$ for all $0 \leqslant l \leqslant k.$ Hence, we can estimate
\begin{align*}
    \underline{f}^{(k+1)} - \underline{f}^{(k)} &= \phi \kappa^{-\frac{1}{\phi}} \left( \left|\underline{z}^{(k)}\right|^{1+\frac{1}{\phi}} - \left|\underline{z}^{(k-1)}\right|^{1+\frac{1}{\phi}} \right) - c(\underline{z}^{(k)} - \underline{z}^{(k-1)}) \\
    &\geqslant \left[ -\left(\phi + 1 \right)\kappa^{-\frac{1}{\phi}}\left| \underline{z}^{(k-1)} \right|^{\frac{1}{\phi}} - c\right](\underline{z}^{(k)} - \underline{z}^{(k-1)})\\
    &\geqslant 0.
\end{align*}
Likewise, we show
$$
\overline{f}^{(k+1)} - \underline{f}^{(k+1)} \geqslant 0,
$$
as well as
$$
\overline{f}^{(k+1)} - \overline{f}^{(k)} \leqslant 0.
$$
Since $\underline{z}^{(k+1)}|_{t=T} = \underline{z}^{(k)}|_{t=T} = \overline{z}^{(k)}|_{t=T} = \overline{z}^{(k+1)}|_{t=T},$ we conclude the desired result from Corollary \ref{cor:Comparison}. 
\end{proof}

From Lemmas \ref{lem:Step1} and \ref{lem:Step2}, we conclude the well-definiteness of the sequences $\{\underline{z}^{(k)}\}_{k\geqslant 0}$ and $\{\overline{z}^{(k)}\}_{k\geqslant 0}$ such that they are viscosity solutions in the class $\mathcal{G}$ of the PDEs
$$
\begin{cases}
\mathcal{A}_c \underline{z}^{(k)} =  \gamma \sigma^{1+\phi} - \phi \kappa^{-\frac{1}{\phi}}\left| \underline{z}^{(k-1)} \right|^{1+\frac{1}{\phi}} + c \underline{z}^{(k-1)} &\text{ in } \left]0,T\right[\times \mathbb{R}^d,\\
\underline{z}^{(k)} = -A &\text{ in } \mathbb{R}^d,
\end{cases}
$$
$$
\begin{cases}
\mathcal{A}_c \overline{z}^{(k)} =  \gamma \sigma^{1+\phi} - \phi \kappa^{-\frac{1}{\phi}}\left| \overline{z}^{(k-1)} \right|^{1+\frac{1}{\phi}} + c \overline{z}^{(k-1)} &\text{ in } \left]0,T\right[\times \mathbb{R}^d,\\
\overline{z}^{(k)} = -A &\text{ in } \mathbb{R}^d,
\end{cases}
$$
for $k\geqslant 1,$ and that satisfy 
$$ \underline{z}^{(0)} := \underline{z} \text{ and } \overline{z}^{(0)} := \overline{z},$$ 
for $k=0$. We emphasize that, in particular, the membership in the class $\mathcal{G}$ ensures their continuity. Furthermore, they satisfy
$$
\underline{z} \leqslant \underline{z}^{(k)} \leqslant \underline{z}^{(k+1)} \leqslant \overline{z}^{(k+1)} \leqslant\overline{z}^{(k)} \leqslant \overline{z},
$$
for all $k\geqslant 0.$ Therefore, it is licit to define the following pointwise limits
$$
\underline{z}^*(t,\boldsymbol{y}) := \lim_{k\rightarrow \infty}\underline{z}^{(k)}(t,\boldsymbol{y}) \text{ and } \overline{z}^*(t,\boldsymbol{y}) := \lim_{k\rightarrow \infty}\overline{z}^{(k)}(t,\boldsymbol{y}) \hspace{1.0cm} ((t,\boldsymbol{y}) \in \left[0,T\right] \times \mathbb{R}^d).
$$
We observe that they satisfy
$$
\underline{z} \leqslant \underline{z}^* \leqslant \overline{z}^* \leqslant \overline{z}.
$$

\begin{theorem} \label{thm:MainPDE}
The PDE (\ref{eq:EquivMainPDE}) has a unique bounded continuous viscosity solution $z.$ Moreover, it is given by $z = \underline{z}^* = \overline{z}^*.$ 
\end{theorem}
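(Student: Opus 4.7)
My plan is to build a bounded continuous viscosity solution as the common value $z^{\ast} := \underline{z}^{\ast} = \overline{z}^{\ast}$ of the two monotone limits, verify that it solves \eqref{eq:EquivMainPDE} in the viscosity sense via a nonlinear Feynman--Kac fixed-point equation, and deduce uniqueness through a backward Gronwall argument on the sup norm. The engine throughout is Theorem \ref{thm:Viscosity}, combined with the already-established bounds $\underline{z}\leqslant\underline{z}^{(k)},\overline{z}^{(k)}\leqslant\overline{z}$.

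I start by applying Theorem \ref{thm:Viscosity} to each iterate to obtain its Feynman--Kac representation. Because all iterates stay in $[\underline{z}(t),0]$ and $c$ is bounded, the integrands are dominated uniformly in $k$, so dominated convergence delivers the nonlinear fixed-point identities
\begin{equation*}
\underline{z}^{\ast}(t,\boldsymbol{y}) = \mathbb{E}_{t,\boldsymbol{y}}\!\left[\int_t^T e^{\int_t^u c(\tau,\boldsymbol{y}_\tau)\,d\tau}\,F(\underline{z}^{\ast})(u,\boldsymbol{y}_u)\,du - A\,e^{\int_t^T c(\tau,\boldsymbol{y}_\tau)\,d\tau}\right],
\end{equation*}
with $F(z):=\phi\kappa^{-1/\phi}|z|^{1+1/\phi}-\gamma\sigma^{1+\phi}-cz$, and the analogous equation for $\overline{z}^{\ast}$. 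Setting $w:=\overline{z}^{\ast}-\underline{z}^{\ast}\geqslant 0$ and applying the mean value theorem to $y\mapsto|y|^{1+1/\phi}$ on $[\underline{z},0]$, I find
\begin{equation*}
F(\overline{z}^{\ast})-F(\underline{z}^{\ast}) = (\phi+1)\kappa^{-1/\phi}\bigl[(-\underline{z})^{1/\phi}-(-\xi)^{1/\phi}\bigr]\,w,
\end{equation*}
for some $\xi\in[\underline{z}^{\ast},\overline{z}^{\ast}]$. Since $\xi\geqslant\underline{z}$, the bracket is nonnegative and uniformly bounded --- this is precisely the design behind the choice $c=-(\phi+1)(|\underline{z}|/\kappa)^{1/\phi}$. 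Writing $W(t):=\sup_{\boldsymbol{y}}w(t,\boldsymbol{y})$, the fixed-point identity collapses to $W(t)\leqslant K\int_t^T W(u)\,du$, and backward Gronwall yields $W\equiv 0$. Continuity of $z^{\ast}$ then follows from an $\varepsilon/3$ sandwich argument: since $\underline{z}^{(k)}$ and $\overline{z}^{(k)}$ are continuous and converge monotonically to $z^{\ast}$ pointwise, a suitable $k$ and neighborhood of any $(t_0,\boldsymbol{y}_0)$ make both iterates simultaneously close to $z^{\ast}(t_0,\boldsymbol{y}_0)$.

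Continuity of $z^{\ast}$ renders $F(z^{\ast})$ bounded continuous, whence Theorem \ref{thm:Viscosity} promotes the fixed-point identity into a viscosity-solution statement for \eqref{eq:EquivMainPDE}. For uniqueness, let $z$ be any bounded continuous viscosity solution; the source $F(z)$ is then bounded and continuous, so Theorem \ref{thm:Viscosity} also places $z$ in its own Feynman--Kac representation. Subtracting from that of $z^{\ast}$ and using local Lipschitz continuity of $y\mapsto|y|^{1+1/\phi}$ on the joint bounded range of $z$ and $z^{\ast}$, I obtain
\begin{equation*}
|z(t,\boldsymbol{y})-z^{\ast}(t,\boldsymbol{y})|\leqslant C\int_t^T\mathbb{E}_{t,\boldsymbol{y}}\bigl[|z-z^{\ast}|(u,\boldsymbol{y}_u)\bigr]\,du;
\end{equation*}
another backward Gronwall on the sup norm yields $z\equiv z^{\ast}$.

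The main obstacle I anticipate is the quantitative step in the second paragraph: verifying that $c$ dominates the linearization of $F$ with an explicit constant independent of $k$, on the entire range of iterates and of admissible bounded solutions. The prescribed form of $c$ is built exactly to make this work, but one must check that the same mechanism delivers the Gronwall bound used for uniqueness in the third paragraph; otherwise the monotone-iteration scheme would not close in the limit and the sup-norm Gronwall would degrade. Beyond this, the remainder is a routine combination of dominated convergence, Theorem \ref{thm:Viscosity}, and Corollary \ref{cor:Comparison}.
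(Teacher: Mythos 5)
Your proposal is correct and runs on the same engine as the paper --- Feynman--Kac representations of the monotone iterates, dominated convergence to a fixed-point identity, and a backward Gronwall estimate --- but it orders the key steps in a genuinely different way. The paper first proves continuity of $\underline{z}^*$ directly from the fixed-point representation via SDE estimates (its Appendix~\ref{app:ContinuityIssue}), then invokes Theorem~\ref{thm:Viscosity} to upgrade the representation to the viscosity property, and only obtains $\underline{z}^*=\overline{z}^*$ afterwards, as a byproduct of the uniqueness argument between two continuous solutions. You instead prove $\underline{z}^*=\overline{z}^*$ first, by subtracting the two fixed-point identities and exploiting that $c=-(\phi+1)\left(|\underline{z}|/\kappa\right)^{1/\phi}$ dominates the linearization of the nonlinearity on the range of the iterates (your mean-value computation of the derivative of $F$ is exactly right, and the bracket is indeed nonnegative and bounded because $\xi\geqslant\underline{z}$), and you then get continuity of the common limit for free from the two-sided monotone sandwich: it is an increasing limit of continuous functions, hence lower semicontinuous, and a decreasing limit, hence upper semicontinuous. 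This buys a more elementary continuity proof that bypasses the paper's Appendix~\ref{app:ContinuityIssue} entirely; your final uniqueness step against an arbitrary bounded continuous solution is then essentially the paper's Gronwall argument (the paper writes the nonlinearity as a bounded difference quotient $g$, you use a local Lipschitz constant on the joint bounded range --- the same estimate).

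One wrinkle to patch: in your second paragraph you form $W(t):=\sup_{\boldsymbol{y}}w(t,\boldsymbol{y})$ \emph{before} continuity of $w=\overline{z}^*-\underline{z}^*$ is available, so measurability of $W$ (needed to integrate it in the Gronwall step) is not automatic; at that stage $w$ is only a pointwise limit of continuous functions, and the supremum of such a function cannot be reduced to a countable dense set. The fix is cheap: iterate the pointwise inequality $w(t,\boldsymbol{y})\leqslant K\,\mathbb{E}_{t,\boldsymbol{y}}\bigl[\int_t^T w(u,\boldsymbol{y}_u)\,du\bigr]$ directly (using only that $w$ is Borel and bounded by $2A$), which gives $w\leqslant 2A\,K^{n}(T-t)^{n}/n!$ for every $n$ and hence $w\equiv 0$, without ever taking a supremum in $\boldsymbol{y}$. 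In the paper's uniqueness step this issue does not arise because both functions there are continuous by hypothesis. With that adjustment your argument is complete.
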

\begin{proof}
Firstly, we note that Theorem \ref{thm:Viscosity} implies
\begin{align*}
    \begin{split}
        \underline{z}^{(k)}(t,\boldsymbol{y}) = \mathbb{E}_{t,\boldsymbol{y}}\Bigg[\int_t^T e^{\int_t^u c(\tau,\boldsymbol{y}_\tau)\,d\tau}\Bigg\{&- \gamma \sigma^{1+\phi}(\boldsymbol{y}_u) + \phi \kappa(\boldsymbol{y}_u)^{-\frac{1}{\phi}} \left| \underline{z}^{(k-1)}(u,\boldsymbol{y}_u) \right|^{1+\frac{1}{\phi}}\\
        &- c(u,\boldsymbol{y}_u)\underline{z}^{(k-1)}(u,\boldsymbol{y}_u) \Bigg\}\,du - A e^{\int_t^T c(u,\boldsymbol{y}_u)\,du} \Bigg].
    \end{split}
\end{align*}
Next, we can let $k\rightarrow \infty$ and use the Dominated Convergence Theorem to deduce that $\underline{z}^*$ solves
\begin{align} \label{eq:FixedPoint}
    \begin{split}
        \underline{z}^*(t,\boldsymbol{y}) = \mathbb{E}_{t,\boldsymbol{y}}\Bigg[\int_t^T e^{\int_t^u c(\tau,\boldsymbol{y}_\tau)\,d\tau}\Bigg\{&- \gamma \sigma^{1+\phi}(\boldsymbol{y}_u) + \phi \kappa(\boldsymbol{y}_u)^{-\frac{1}{\phi}} \left| \underline{z}^*(u,\boldsymbol{y}_u) \right|^{1+\frac{1}{\phi}}\\
        &- c(u,\boldsymbol{y}_u)\underline{z}^*(u,\boldsymbol{y}_u) \Bigg\}\,du - A e^{\int_t^T c(u,\boldsymbol{y}_u)\,du} \Bigg].
    \end{split}
\end{align}
From the representation \eqref{eq:FixedPoint}, we can show that $\underline{z}^*$ is continuous, see \ref{app:ContinuityIssue}. Therefore, according to Theorem \ref{thm:Viscosity}, the function on the right-hand side of (\ref{eq:FixedPoint}), which we proved to be equal to $\underline{z}^*,$ is also continuous and solves the PDE
$$
\begin{cases}
\mathcal{A}_c(\underline{z}^*) =  \gamma \sigma^{1+\phi} - \phi \kappa^{-\frac{1}{\phi}}\left| \underline{z}^* \right|^{1+\frac{1}{\phi}} + c\underline{z}^* &\text{ in } \left]0,T\right[\times\mathbb{R}^d, \\
\underline{z}^*|_{t=T} = -A &\text{ in } \mathbb{R}^d.
\end{cases}
$$
in the viscosity sense. In other words, $\underline{z}^*$ is a viscosity solution of (\ref{eq:EquivMainPDE}) or, equivalently, this function solves (\ref{eq:ActualMainPDE}). We can make the same argument to show that $\overline{z}^*$ enjoys this same property. This proves the existence part of the Theorem. The fact that $\underline{z}^* = \overline{z}^*$ will follow from the proof of the uniqueness of continuous and bounded viscosity solutions of \eqref{eq:ActualMainPDE}, which we now turn to show.

Let us assume $\widetilde{z}_i,$ $i=1,2,$ are two bounded continuous viscosity solutions of \eqref{eq:ActualMainPDE}. Applying Theorem \ref{thm:Viscosity}, we infer
$$
\widetilde{z}_i(t,\boldsymbol{y}) = \mathbb{E}_{t,\boldsymbol{y}}\left[\int_t^T\left\{- \gamma \sigma^{1+\phi}(\boldsymbol{y}_u) + \phi \kappa(\boldsymbol{y}_u)^{-\frac{1}{\phi}}\left| \widetilde{z}_i(u,\boldsymbol{y}_u) \right|^{1+\frac{1}{\phi}} \right\}\,du \right] -A.
$$
Setting $\delta := \widetilde{z}_1 - \widetilde{z}_2,$ we obtain
\begin{equation} \label{eq:Uniqueness1}
    \delta(t,\boldsymbol{y}) = \mathbb{E}_{t,\boldsymbol{y}}\left[\int_t^T g(u,\boldsymbol{y}_u)\delta(u,\boldsymbol{y}_u)\,du \right],
\end{equation}
where
$$
g(t,\boldsymbol{y}) := \begin{cases}
\phi \kappa(\boldsymbol{y})^{-\frac{1}{\phi}}\left( \frac{ \left| \widetilde{z}_1(u,\boldsymbol{y}) \right|^{1+\frac{1}{\phi}} - \left| \widetilde{z}_2(u,\boldsymbol{y} ) \right|^{1+\frac{1}{\phi}} }{ \widetilde{z}_1(u,\boldsymbol{y} ) - \widetilde{z}_2(u,\boldsymbol{y} )} \right) &\text{ if } \widetilde{z}_1(t,\boldsymbol{y}) \neq \widetilde{z}_2(t,\boldsymbol{y}),\\
(\phi + 1) \kappa(\boldsymbol{y} )^{-\frac{1}{\phi}} \left| \widetilde{z}_1(t,\boldsymbol{y}) \right|^{\frac{1}{\phi}}\sign\left(\widetilde{z}_1(t,\boldsymbol{y})\right)  &\text{ otherwise.}
\end{cases}
$$
We notice that $g$ is bounded. Let $C>0$ be a constant such that $|g| \leqslant C.$ We set
$$
\Delta(t) := \sup_{\boldsymbol{y}}|\delta(t,\,\boldsymbol{y})|.
$$
Therefore, from identity (\ref{eq:Uniqueness1}) we infer
\begin{equation} \label{eq:Uniqueness3}
    \Delta(t) \leqslant C\int_t^T \Delta(u)\,du \hspace{1.0cm} (0 \leqslant t \leqslant T).
\end{equation}
An application of Gronwall's Lemma gives $\Delta \equiv 0,$ whence $\widetilde{z}_1 \equiv \widetilde{z}_2.$ This finishes the proof of the Theorem.
\end{proof}

\begin{corollary}
The convergences $\lim_{k\rightarrow \infty} \overline{z}^{(k)} = \lim_{k\rightarrow \infty} \underline{z}^{(k)} = z$ are uniform over compact subsets of $\left[0,T\right[\times \mathbb{R}^d.$
\end{corollary}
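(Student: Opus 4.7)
The plan is to invoke Dini's theorem, since all the ingredients for it are already in place. We have, from Lemmas \ref{lem:Step1} and \ref{lem:Step2}, the pointwise monotonicity relations
$$
\underline{z}^{(k)} \leqslant \underline{z}^{(k+1)} \quad \text{and} \quad \overline{z}^{(k)} \geqslant \overline{z}^{(k+1)}
$$
for every $k \geqslant 0$. Moreover, each $\underline{z}^{(k)}$ and each $\overline{z}^{(k)}$ lies in $\mathcal{G}$, so they are in particular continuous on $[0,T]\times\mathbb{R}^d$. Theorem \ref{thm:MainPDE} furnishes the missing piece, namely, the continuity of the common pointwise limit $z = \underline{z}^* = \overline{z}^*$.

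Given any compact set $K \subset [0,T[\times \mathbb{R}^d$, I would then apply the classical Dini theorem to the monotone increasing sequence $\{\underline{z}^{(k)}\}_{k \geqslant 0}$ of continuous functions on $K$ converging pointwise to the continuous function $z$ on $K$, obtaining uniform convergence on $K$. The same argument applied to the monotone decreasing sequence $\{\overline{z}^{(k)}\}_{k \geqslant 0}$ yields the dual statement. Since $K$ was an arbitrary compact subset of $[0,T[\times\mathbb{R}^d$, this concludes the corollary.

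I do not foresee any genuine obstacle here: the hypotheses of Dini's theorem (monotonicity of the sequence, continuity of each term, continuity of the limit, and compactness of the domain) are exactly what the preceding Lemmas and Theorem \ref{thm:MainPDE} provide. The only small remark worth making is that although all the functions involved are defined on $[0,T]\times\mathbb{R}^d$, restricting to compacta inside $[0,T[\times\mathbb{R}^d$ is harmless because every such compactum is still a compact subset of the larger closed strip, so Dini applies verbatim.
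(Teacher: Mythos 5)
Your proposal is correct and follows essentially the same route as the paper: the paper's proof likewise combines the monotonicity of the iterates, the continuity of each $\underline{z}^{(k)}$, $\overline{z}^{(k)}$, and the continuity of the common limit $z$ established in Theorem \ref{thm:MainPDE}, and then applies Dini's Theorem on compact subsets. Nothing is missing.
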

\begin{proof}
From Theorem \ref{thm:MainPDE}, we know that the limiting functions $\overline{z}^*$ and $\underline{z}^*$ indeed coincide and are continuous. An application of Dini's Theorem, see \cite[Theorem 7.13]{rudin1964principles}, gives the result we stated.
\end{proof}

\subsection{A verification result and some properties of the optimal strategy} \label{eq:propertiesOptimalStrat}

The first result we expose in this subsection are representations of the optimal speed of trading and inventory in terms of the solution of (\ref{eq:ActualMainPDE}).

\begin{theorem} \label{thm:OptimalProcs}
The value function is indeed given by \eqref{eq:Ansatz}. Thus, the optimal speed of trading $\{\nu^*_t\}_{0\leqslant t \leqslant T}$ and the corresponding optimal inventory holdings $\left\{Q^*_t := Q^{\nu^*}_t\right\}_{0\leqslant t \leqslant T}$ are given by
\begin{equation} \label{eq:OptimalSpeed}
    \nu^*_t = - Q_0 \left( - \frac{z(t,\boldsymbol{y}_t)}{ \kappa(\boldsymbol{y}_t)} \right)^{\frac{1}{\phi}} \exp\left( - \int_0^t \left( - \frac{z(u,\boldsymbol{y}_u)}{ \kappa(\boldsymbol{y}_u)} \right)^{\frac{1}{\phi}} \, du \right),
\end{equation}
and
\begin{equation} \label{eq:OptimalInventory}
    Q^*_t = Q_0 \exp\left( - \int_0^t \left( - \frac{z(u,\boldsymbol{y}_u)}{ \kappa(\boldsymbol{y}_u)} \right)^{\frac{1}{\phi}} \, du \right).
\end{equation}
\end{theorem}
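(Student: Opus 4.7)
My plan is a verification argument centered on the ansatz $\widetilde{J}(t,q,\boldsymbol{y}) := z(t,\boldsymbol{y})|q|^{1+\phi}$, which I want to show coincides with the value function $J$ and is attained by the feedback control extracted from \eqref{eq:ControlFeedback1}. Differentiating the ansatz yields $\partial_q \widetilde{J} = (1+\phi)\sgn(q)|q|^\phi z$, and since $\widetilde{H}'(p) = \sgn(p)(|p|/(1+\phi))^{1/\phi}$, the bound $z < 0$ from \eqref{eq:BoundsOnSubSuper} inserted into \eqref{eq:ControlFeedback1} gives the feedback form $\nu^*(t,q,\boldsymbol{y}) = -q\,\psi(t,\boldsymbol{y})$ with $\psi(t,\boldsymbol{y}) := (-z(t,\boldsymbol{y})/\kappa(\boldsymbol{y}))^{1/\phi}$. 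Solving the linear inventory ODE $dQ^*_t = -Q^*_t\,\psi(t,\boldsymbol{y}_t)\,dt$ produces \eqref{eq:OptimalInventory}, and substituting back yields \eqref{eq:OptimalSpeed}. Since $z$ is bounded and $\kappa \geqslant \underline{\kappa} > 0$, $\psi$ is uniformly bounded, so both $Q^*$ and $\nu^*$ are bounded processes; in particular $\nu^* \in \mathcal{U}_0$.

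The equality $J^{\nu^*} = \widetilde{J}$ I would obtain by linearizing \eqref{eq:ActualMainPDE} around $\nu^*$. Setting $c(t,\boldsymbol{y}) := -(1+\phi)\psi(t,\boldsymbol{y})$, which is continuous and bounded, a brief computation using $\kappa H(z/\kappa) = \phi\kappa\psi^{1+\phi}$ and (since $z = -\kappa\psi^\phi$) $cz = (1+\phi)\kappa\psi^{1+\phi}$ rewrites \eqref{eq:ActualMainPDE} as the \emph{linear} PDE
$$\partial_t z + \mathcal{L}z + cz - (\kappa\psi^{1+\phi}+\gamma\sigma^{1+\phi}) = 0, \qquad z(T,\cdot) = -A.$$
Since the viscosity sub/supersolution test inequalities for the nonlinear and the linear equations coincide at $u = z$, the bounded continuous function $z$ is also a viscosity solution of this linear PDE in the class $\mathcal{G}$. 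Theorem \ref{thm:Viscosity} then identifies $z(t,\boldsymbol{y})$ with a Feynman-Kac expectation, and multiplying by $|q|^{1+\phi}$ while using $|Q^*_u|^{1+\phi} = |q|^{1+\phi}\exp(\int_t^u c\,d\tau)$ shows that this expectation is precisely what appears upon substituting $\nu^*$ and $Q^*$ into \eqref{eq:PerfCrit}. This delivers $J^{\nu^*}(t,q,\boldsymbol{y}) = z(t,\boldsymbol{y})|q|^{1+\phi}$, which in particular implies $J \geqslant \widetilde{J}$.

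For the reverse inequality $\widetilde{J} \geqslant J^\nu$ for every $\nu \in \mathcal{U}_t$, the key is the Legendre inequality $\nu\,\partial_q\widetilde{J} - \kappa\widetilde{H}(\partial_q\widetilde{J}/\kappa) \leqslant \kappa|\nu|^{1+\phi}$, saturated precisely at $\nu^*$. The main obstacle is that $z$ is only a continuous viscosity solution, not known to be $C^{1,2}$, so a direct It\^o expansion of $\widetilde{J}(s, Q^\nu_s, \boldsymbol{y}_s)$ is unavailable. I envisage two ways around this: (i) repeat the linearization argument of the previous paragraph with an arbitrary admissible $\nu$ in place of $\nu^*$, in which case the Legendre slack contributes a non-negative term to the inhomogeneity and the resulting Feynman-Kac comparison yields $z(t,\boldsymbol{y})|q|^{1+\phi} \geqslant J^\nu(t,q,\boldsymbol{y})$; or (ii) work with the monotone iterates $\underline{z}^{(k)}, \overline{z}^{(k)}$ of Lemmas \ref{lem:Step1}-\ref{lem:Step2}, whose linear defining PDEs admit a classical It\^o treatment after a routine mollification in $\boldsymbol{y}$, and then pass to the limit via the uniform convergence deduced from Dini's theorem. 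The main delicacy in either route is controlling integrability and justifying the interchange of limit and expectation; the uniform bounds $\underline{z} \leqslant z \leqslant \overline{z}$ combined with the boundedness of $\kappa$ and $\sigma$ supply the required dominations.
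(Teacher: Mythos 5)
Your first half is sound and, in fact, close in spirit to the paper's own argument in \ref{app:Verification}: the feedback form $\nu^*=-\psi(t,\boldsymbol{y})q$ with $\psi=(-z/\kappa)^{1/\phi}$, the inventory ODE giving \eqref{eq:OptimalInventory}--\eqref{eq:OptimalSpeed}, and the identification $J^{\nu^*}=z|q|^{1+\phi}$ obtained by freezing the zeroth-order nonlinearity of \eqref{eq:ActualMainPDE} along $z$ (so that $z$ is a viscosity solution of a linear equation with bounded continuous coefficients, to which Theorem \ref{thm:Viscosity} applies) are all correct; the algebra $\kappa H(z/\kappa)=cz-\kappa\psi^{1+\phi}$ checks out, and multiplying the Feynman--Kac representation by $|q|^{1+\phi}$ does reproduce \eqref{eq:PerfCrit} evaluated at $\nu^*$. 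This yields $J\geqslant z|q|^{1+\phi}$.

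The gap is in the reverse inequality $z(t,\boldsymbol{y})|q|^{1+\phi}\geqslant J^{\nu}(t,q,\boldsymbol{y})$ for \emph{every} $\nu\in\mathcal{U}_t$, which is the heart of the verification and which you only sketch. Route (i) does not extend as stated: a general admissible $\nu$ is merely a progressively measurable process, not a proportional Markov feedback $-\rho(t,\boldsymbol{y})q$, so $J^{\nu}$ is not the Feynman--Kac functional of any linear PDE in $(t,\boldsymbol{y})$ with continuous bounded coefficients, and Corollary \ref{cor:Comparison} has nothing to compare; restricting to proportional feedback controls and then asserting that the supremum over $\mathcal{U}_t$ is unchanged would be circular, since that is essentially what is to be proved. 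Route (ii) is not ``routine'' either: the iterates $\underline{z}^{(k)},\overline{z}^{(k)}$ are themselves only viscosity solutions of possibly degenerate linear equations with merely continuous coefficients, mollification of such solutions does not in general produce classical super- or subsolutions, and the source terms involve the previous iterate, so the Legendre inequality does not propagate through the iteration without further work. The paper's device, which is what your proposal is missing, is to identify $z(u,\boldsymbol{y}_u)$ with the solution $U^{t,\boldsymbol{y}}_u$ of a BSDE (via the same linear-PDE/Feynman--Kac identification you use, plus the Markov property); this makes $u\mapsto z(u,\boldsymbol{y}_u)$ a semimartingale with explicit decomposition, so that It\^o's product rule can be applied to $z(u,\boldsymbol{y}_u)\left|Q^{\nu}_u\right|^{1+\phi}$ for an \emph{arbitrary} admissible $\nu$ — no spatial regularity of $z$ is needed, only the finite variation of $|Q^{\nu}|^{1+\phi}$ — leading to identity \eqref{eq:Verification_main_identity}; the pointwise maximization of $\mathcal{H}$ then gives $z|q|^{1+\phi}\geqslant J^{\nu}$ for all $\nu$, with equality at $\nu^*$. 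Some such pathwise/martingale argument (or an equivalent general verification theorem for viscosity solutions) is needed to close your proof.
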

\begin{proof}
We can prove the verification result, namely, that $J(t,q,\boldsymbol{y}) = z(t,\boldsymbol{y})|q|^{1+\phi},$ just as in \cite[Proposition 2.10]{graewe2018smooth} or \cite[Proposition 2.8]{horst2020continuous}. The proof here is even easier because we are considering the regularized problem, and there is no jump component on the trader's inventory process; for completeness, we include it in \ref{app:Verification}.

From \eqref{eq:ControlFeedback1} and \eqref{eq:Ansatz}, and recalling that $z \leqslant \overline{z} \leqslant 0,$ we derive
\begin{equation} \label{eq:ControlFeedback2}
    \nu^*(t,q,\boldsymbol{y}) := \sign\left( \partial_q J(t,q,\boldsymbol{y}) \right) \left( \frac{|\partial_q J(t,q,\boldsymbol{y})|}{(1+\phi)\kappa(\boldsymbol{y})} \right)^{\frac{1}{\phi}} = - \left( - \frac{z(t,\boldsymbol{y})}{ \kappa(\boldsymbol{y})} \right)^{\frac{1}{\phi}}q.
\end{equation}
Therefore, from 
$$
dQ^*_t = \nu^*\left( t,\, Q^*_t,\, \boldsymbol{y}_t \right) \,dt = - \left( - \frac{z(t,\boldsymbol{y}_t)}{ \kappa(\boldsymbol{y}_t)} \right)^{\frac{1}{\phi}} Q^*_t \, dt
$$
we deduce
\begin{equation*}
    Q^*_t = Q_0 \exp\left( - \int_0^t \left( - \frac{z(u,\boldsymbol{y}_u)}{ \kappa(\boldsymbol{y}_u)} \right)^{\frac{1}{\phi}} \, du \right).
\end{equation*}
This proves \eqref{eq:OptimalInventory}. Using \eqref{eq:OptimalInventory} in \eqref{eq:ControlFeedback2}, we show \eqref{eq:OptimalSpeed}.
\end{proof}

\begin{corollary}
The optimal terminal inventory holdings satisfies
\begin{equation} \label{eq:InvBound}
    |Q^*_t| \leqslant |Q_0| \left( \frac{T  - t + A^{-1/\phi} }{T + A^{-1/\phi}} \right)^{\left( \frac{\ell}{\overline{\kappa}}\right)^{1/\phi}} ,
\end{equation}
for each $0 \leqslant t \leqslant T,$ where we have written
\begin{equation} \label{eq:LowerBoundCForInv}
    \ell := \inf_{ s \in \left[0,T\right] } \left[ |\overline{z}\left(s\right)|(T-s+A^{-1/\phi})^{\phi} \right].
\end{equation}
\end{corollary}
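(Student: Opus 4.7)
The plan is to read off the claim directly from the explicit formula for $Q^*_t$ given in \eqref{eq:OptimalInventory}, combined with the bound $\underline{z} \leqslant z \leqslant \overline{z} \leqslant 0$ established in \eqref{eq:BoundsOnSubSuper} and the definition of $\ell$.

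First I would take absolute values in \eqref{eq:OptimalInventory} to obtain
\begin{equation*}
    |Q^*_t| = |q| \exp\left( - \int_0^t \left( - \frac{z(u,\boldsymbol{y}_u)}{\kappa(\boldsymbol{y}_u)} \right)^{1/\phi} du \right).
\end{equation*}
Since $z \leqslant \overline{z} \leqslant 0$ pointwise, we have $-z \geqslant -\overline{z} = |\overline{z}| \geqslant 0$, and using the upper bound $\kappa \leqslant \overline{\kappa}$ from \textbf{(H2)}, it follows that
\begin{equation*}
    \left( -\frac{z(u,\boldsymbol{y}_u)}{\kappa(\boldsymbol{y}_u)} \right)^{1/\phi} \geqslant \left( \frac{|\overline{z}(u)|}{\overline{\kappa}} \right)^{1/\phi}.
\end{equation*}
Since the exponential is decreasing in the exponent magnitude, this yields $|Q^*_t| \leqslant |q| \exp\bigl( -\int_0^t (|\overline{z}(u)|/\overline{\kappa})^{1/\phi}\,du \bigr)$.

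Next, I would invoke the definition \eqref{eq:LowerBoundCForInv} of $\ell$, which implies $|\overline{z}(u)| \geqslant \ell / (T-u+A^{-1/\phi})^{\phi}$ for all $u \in [0,T]$. Substituting,
\begin{equation*}
    \int_0^t \left( \frac{|\overline{z}(u)|}{\overline{\kappa}} \right)^{1/\phi} du \geqslant \left( \frac{\ell}{\overline{\kappa}} \right)^{1/\phi} \int_0^t \frac{du}{T-u+A^{-1/\phi}} = \left( \frac{\ell}{\overline{\kappa}} \right)^{1/\phi} \log\!\left( \frac{T + A^{-1/\phi}}{T - t + A^{-1/\phi}} \right).
\end{equation*}
Exponentiating and combining with the previous display yields precisely \eqref{eq:InvBound}.

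The only point that merits a brief comment is that $\ell$ is strictly positive and finite: this is guaranteed by the two-sided estimate \eqref{eq:SubAndSuperBlowupRate} on $\overline{z}$, which shows that $|\overline{z}(s)|(T-s+A^{-1/\phi})^{\phi}$ is bounded above and below by positive constants uniformly in $s$ (and in fact uniformly in $A$, which is useful for the singular-limit analysis of Section \ref{sec:singular}). Beyond this observation there is no real obstacle; the argument is a direct chain of pointwise inequalities feeding into the exponential formula.
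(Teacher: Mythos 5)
Your proposal is correct and follows essentially the same route as the paper: start from the explicit formula \eqref{eq:OptimalInventory}, bound $|z|\geqslant|\overline{z}|$ and $\kappa\leqslant\overline{\kappa}$, use the definition of $\ell$ to compare with $\ell/(T-u+A^{-1/\phi})^{\phi}$, and integrate to produce the logarithm before exponentiating. Your version merely makes explicit the intermediate passage through $|\overline{z}|$ (and the positivity of $\ell$), which the paper leaves implicit.
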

\begin{remark}
In view of \eqref{eq:BoundsOnSubSuper}, we observe that the constant $\ell$ we defined in \eqref{eq:LowerBoundCForInv} is strictly positive, and that it is independent of $A.$
\end{remark}
\begin{proof}
From \eqref{eq:OptimalInventory}, we deduce
\begin{align}
    \begin{split}
        |Q_t^*| &\leqslant |Q_0| \exp\left( - \frac{1}{\overline{\kappa}^{1/\phi}}\int_0^t |z(u,\boldsymbol{y}_u)|^{1/\phi} \,du \right) \\
        &= |Q_0| \exp\left[ - \frac{1}{\overline{\kappa}^{1/\phi}}\int_0^t \frac{|z(u,\boldsymbol{y}_u)|^{1/\phi}(T - u +A^{-1/\phi})}{(T - u+A^{-1/\phi})} \,du \right] \\
        &\leqslant |Q_0| \exp\left[ - \frac{\ell^{1/\phi}}{\overline{\kappa}^{1/\phi}}\int_0^t \frac{1}{(T - u +A^{-1/\phi})} \,du \right] \\
        &\leqslant |Q_0| \exp\left[ - \frac{\ell^{1/\phi}}{\overline{\kappa}^{1/\phi}} \log\left( \frac{T+A^{-1/\phi} }{T-t+A^{-1/\phi}} \right) \right],
    \end{split}
\end{align}
from where the result we stated immediately follows.
\end{proof}

Since 
$$
\lim_{A\rightarrow \infty} \left( \frac{A^{-1/\phi} }{T + A^{-1/\phi}} \right)^{\left( \frac{\ell}{\overline{\kappa}}\right)^{1/\phi}} = 0,
$$
we deduce from \eqref{eq:InvBound} that, for any $0 \leqslant \theta < 1,$ we can choose $A$ sufficiently large so as to have $\left|Q_T^*\right| < (1-\theta) |q|.$ More precisely, as long as 
$$
\left( \frac{A^{-1/\phi} }{T + A^{-1/\phi}} \right)^{\left( \frac{\ell}{\overline{\kappa}}\right)^{1/\phi}} < 1 -\theta,
$$
or equivalently,
$$
A > \left\{ \frac{1}{T}\exp\left[ \left( \frac{\ell}{ \overline{\kappa}} \right)^{-1/\phi} \log\left( \frac{1}{1-\theta} \right) \right] \right\}^{\phi}
$$
we guarantee the execution of the fraction $1-\theta$ of the initial inventory. In practice, we can choose $\theta$ in such a way that $(1-\theta)|q|$ is less than one lot size of the asset, resulting in a full execution. We also notice that the bound \eqref{eq:InvBound} is independent of the particular dynamics we assume for the price of the asset, as we can derive it only by assuming that the trader follows strategy $\nu^*$ of \eqref{eq:OptimalSpeed}.

From Theorem \ref{thm:OptimalProcs}, it promptly follows that our optimal strategy does not lead the agent to engage in speculative trading. This is the content of the next result.
\begin{corollary}
The optimal strategy $\left\{ \nu^*_t \right\}_{0\leqslant t \leqslant T}$ does not practice price manipulation, i.e., for $0 \leqslant t \leqslant T,$
\begin{equation} \label{eq:NoPriceManipulation}
    Q_0\,\nu^*_t \leqslant 0\text{ and } Q_0\,Q^*_t \geqslant 0, \, \mathbb{P}-\text{almost surely.}
\end{equation}
\end{corollary}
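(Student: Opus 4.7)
The proof is essentially an immediate consequence of the explicit formulas \eqref{eq:OptimalSpeed} and \eqref{eq:OptimalInventory} established in Theorem \ref{thm:OptimalProcs}, combined with the sign information we already have on $z$. Hence this will be a short verification rather than a substantial argument, and no real obstacle is expected.

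The plan is as follows. First, I would recall that, by construction, $z = \underline{z}^* = \overline{z}^*$ lies between $\underline{z}$ and $\overline{z}$, and by \eqref{eq:BoundsOnSubSuper} both of these are non-positive. Together with the lower bound $\kappa \geqslant \underline{\kappa} > 0$ from \textbf{(H2)}, this yields that $-z(t,\boldsymbol{y})/\kappa(\boldsymbol{y}) \geqslant 0$ for every $(t,\boldsymbol{y})\in[0,T]\times\mathbb{R}^d$. Consequently the quantity
$$
\left(-\frac{z(u,\boldsymbol{y}_u)}{\kappa(\boldsymbol{y}_u)}\right)^{1/\phi}
$$
is well-defined and non-negative along trajectories, and the exponential factor appearing in both \eqref{eq:OptimalSpeed} and \eqref{eq:OptimalInventory} is strictly positive.

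Second, I would read off the two inequalities directly. From \eqref{eq:OptimalInventory},
$$
q\,Q^*_t \;=\; q^{2}\,\exp\!\left(-\int_0^t\!\left(-\frac{z(u,\boldsymbol{y}_u)}{\kappa(\boldsymbol{y}_u)}\right)^{1/\phi}du\right) \;\geqslant\; 0,
$$
$\mathbb{P}$-almost surely, since $q^{2}\geqslant 0$ and the exponential is positive. Similarly, from \eqref{eq:OptimalSpeed},
$$
q\,\nu^*_t \;=\; -q^{2}\left(-\frac{z(t,\boldsymbol{y}_t)}{\kappa(\boldsymbol{y}_t)}\right)^{1/\phi}\exp\!\left(-\int_0^t\!\left(-\frac{z(u,\boldsymbol{y}_u)}{\kappa(\boldsymbol{y}_u)}\right)^{1/\phi}du\right) \;\leqslant\; 0,
$$
$\mathbb{P}$-almost surely, because all three factors after the minus sign are non-negative. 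This establishes \eqref{eq:NoPriceManipulation}.

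I would also add a brief interpretative sentence noting that, since $\mathrm{sgn}(Q^*_t) = \mathrm{sgn}(q)$ throughout $[0,T]$ and $\nu^*_t$ has the opposite sign of $q$, the trader who starts long (resp.\ short) only sells (resp.\ buys) until the horizon, and the inventory never crosses zero in the regularized problem. This is precisely the classical no–round-trip / no–price-manipulation property in the sense of \cite{huberman2004price,gatheral2013dynamical,gueant2016financial}.
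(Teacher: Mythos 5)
Your proposal is correct and follows exactly the route the paper intends: the corollary is stated as an immediate consequence of the explicit representations \eqref{eq:OptimalSpeed}--\eqref{eq:OptimalInventory} together with the sign information $z \leqslant \overline{z} \leqslant 0$ and $\kappa \geqslant \underline{\kappa} > 0$, which is precisely your verification. Your write-up simply makes explicit the sign bookkeeping that the paper leaves to the reader, so there is nothing to add or correct.
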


The first inequality in \eqref{eq:NoPriceManipulation} means that the trader does not buy (respectively, sell) in the context of a liquidation (respectively, acquisition) program. The second one describes that such an agent will not oversell (respectively, overbuy) when executing a portfolio liquidation (respectively, acquisition). Hence, this result does indeed guarantee the absence of price manipulation in the current model.

\subsection{Some numerical experiments}

Our proof of the existence of the solution $z$ of (\ref{eq:ActualMainPDE}) also establishes the convergence of the following numerical algorithm:

\begin{algorithm}[H] \label{algo:NumericalAlgo}
\SetAlgoLined
\KwResult{Numerical solution of (\ref{eq:ActualMainPDE}).}
 Initialize $z^{(0)} := \overline{z}$ or $z^{(0)} := \underline{z},$ $k=0,$ the error variable $\epsilon,$ and stipulate the tolerance $\epsilon_0$\;
 \While{$\epsilon \geqslant \epsilon_0$}{
    $1.
    \begin{cases}
    \partial_t z^{(k+1)} + \mathcal{L}z^{(k+1)} + c z^{(k+1)} =  \gamma \sigma^{1+\phi} - \phi \kappa^{-\frac{1}{\phi}}\left|z^{(k)}\right|^{1+\frac{1}{\phi}} + c z^{(k)} &\text{ in } \left]0,T\right[\times \mathbb{R}^d,\\
    z^{(k+1)}|_{t=T} = -A &\text{ in } \mathbb{R}^d;
    \end{cases}
    $\\
    $2.$ Update $\epsilon;$\\
    $3.$ $k \gets k+1.$
 }
 \Return{$z^k$}
 \caption{Iterative numerical algorithm for solving the PDE (\ref{eq:ActualMainPDE}).}
\end{algorithm}

\begin{remark}
We will show in Section \ref{sec:singular} that the singular limit as $A \rightarrow \infty$ of \eqref{eq:ActualMainPDE} yields the solution of the strict execution problem. Therefore, for large enough $A,$ Algorithm \ref{algo:NumericalAlgo} provides us an approximation for this solution, which is a remarkable advance in the direction of the limitation exposed in \cite[Remark 2.9]{horst2020continuous}.
\end{remark}

At a first step, we notice that the initial iterate $z^{(0)}$ must itself be numerically computed, by using a proper ODE integrator, cf. \eqref{eq:SubSolution} and \eqref{eq:SuperSolution}. There is an easier case, namely when $\phi =1,$ corresponding to a linear temporary price impact setting. In this case, it is straightforward to derive closed-form formulas for both $\underline{z}$ and $\overline{z}.$ Furthermore, each iteration we make in Algorithm \ref{algo:NumericalAlgo}, involves solving a linear parabolic PDE. 

In the numerical experiments that follow, we used a Crank-Nicolson scheme to solve the linear PDE at each iteration step, determining the boundary conditions in the computation domain by linear extrapolation. Here, we make the simplifying assumption of coordinated variation, see \cite[Subsection 1.3]{almgren2012optimal} -- hence, we have $d=m=1.$


We show in Table \ref{tab:Parameters} the parameters that we kept fixed in the numerical experiments that follow. We will describe the remaining ones in each of the corresponding plots. Also, the spatial domain we chose to compute the solution in each of the experiments is $\left[\underline{\boldsymbol{y}},\overline{\boldsymbol{y}}\right] = \left[-5,5\right].$

\begin{table}[!htp]
\centering
\begin{tabular}{@{}cccccc@{}}
\toprule
$T$ & $\boldsymbol{\alpha}\left( \boldsymbol{y} \right)$ & $\boldsymbol{\beta}\left( \boldsymbol{y} \right)$ & $\kappa\left(\boldsymbol{y}\right)$                                                                          & $\sigma\left(\boldsymbol{y}\right)$                                              & $Q^*_0 = q$ \\ \midrule
$5$ & $-5 \boldsymbol{y}$                   & $1$                                  & $\underline{\kappa} \vee \left[ \left( \kappa_0 e^{\boldsymbol{y}} \right)\wedge \overline{\kappa} \right] $ & $\left( \frac{ \kappa_0 }{ \kappa\left( \boldsymbol{y} \right) } \right)^{-1/2}$ & $15$        \\ \bottomrule
\end{tabular}
\caption{Some fixed model parameters we use throughout all of the present simulations. Above, we fix $\kappa_0 := 0.5$ and the mild caps $\underline{\kappa} := \kappa_0 / 10,$ and $\overline{\kappa} := \kappa_0\times 10^4.$}
\label{tab:Parameters}
\end{table}

In Figure \ref{fig:StockVolTemp}, we showcase the particular realization of the stock price corresponding to a volatility and a temporary impact parameter paths that we will use to illustrate the behavior of the strategies. We carry out some comparative statics, varying $A$ in Figure \ref{fig:A}, $\phi$ in Figure \ref{fig:phi}, and $\gamma$ in Figure \ref{fig:gamma}, \textit{ceteris paribus}. We carry out a Monte Carlo simulation of $10^4$ such paths, and demonstrate in Figures \ref{fig:HistogramForGamma} and \ref{fig:HistogramsForPhi} some histograms to illustrate the behavior of the optimal strategies corresponding to each of the values of $\phi$ and $\gamma$ we considered previously. Of course, the same innovations were used for the different parameter values. For $A,$ we find more insightful to understand how the terminal inventory $Q_T^*$ changes with this parameter, whence we plot in Figure \ref{fig:HistogramForA} the histogram of the values of $Q^*_T,$ resulting from this same $10^4$ simulations, for $A\in \left\{3,10\right\}.$  

\begin{figure}[!htp]
    \centering
    \includegraphics[scale = .4]{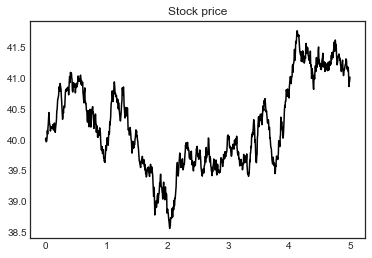}
    \includegraphics[scale = .4]{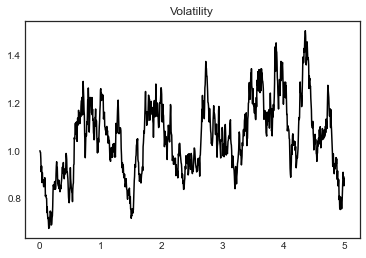}
    \includegraphics[scale = .4]{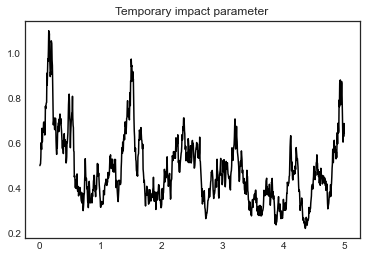}
    \caption{The paths of the stock price, volatility, and temporary impact parameter we used to illustrate the behavior of the strategies in the comparative statics}
    \label{fig:StockVolTemp}
\end{figure}

\begin{figure}[!htp]
    \centering
    \includegraphics[scale = .4]{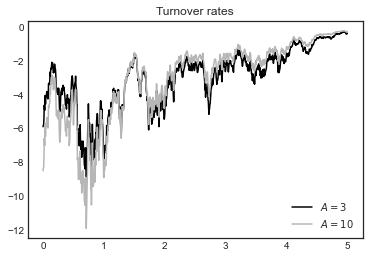}
    \includegraphics[scale = .4]{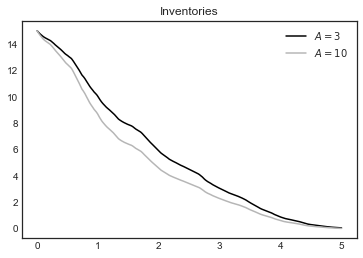}
    \includegraphics[scale = .4]{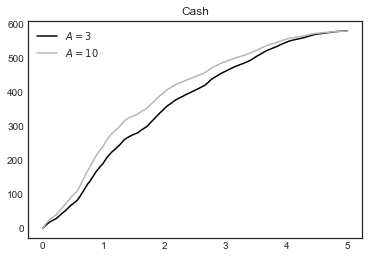}
    \caption{Variation of the strategy and corresponding cash and inventory processes with respect to $A.$ We fixed $\phi = 0.75$ and $\gamma = 0.05.$} 
    \label{fig:A}
\end{figure}

\begin{figure}[!htp]
    \centering
    \includegraphics[scale = .4]{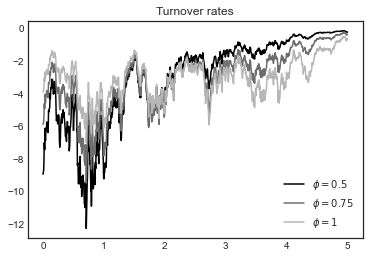}
    \includegraphics[scale = .4]{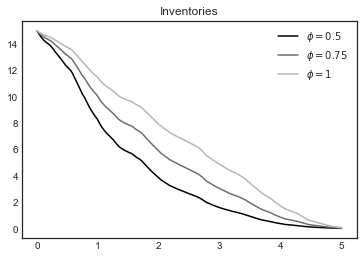}
    \includegraphics[scale = .4]{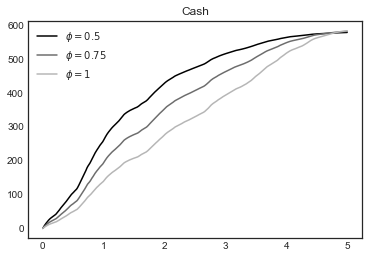}
    \caption{Variation of the strategy and corresponding cash and inventory processes with respect to $\phi.$ We fixed $A = 3$ and $\gamma = 0.05.$} 
    \label{fig:phi}
\end{figure}

\begin{figure}[!htp]
    \centering
    \includegraphics[scale = .4]{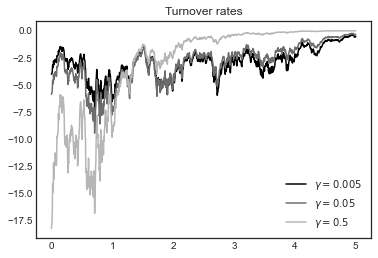}
    \includegraphics[scale = .4]{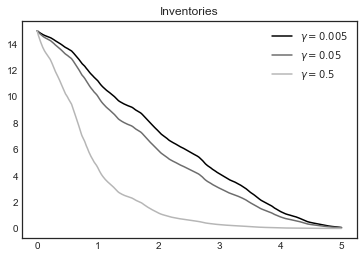}
    \includegraphics[scale = .4]{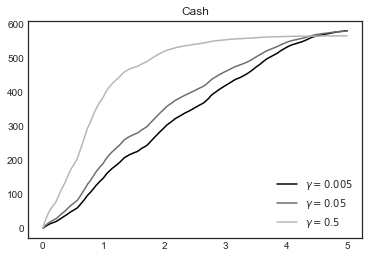}
    \caption{Variation of the strategy and corresponding cash and inventory processes with respect to $\gamma.$ We fixed $A = 3$ and $\phi = 0.75.$} 
    \label{fig:gamma}
\end{figure}

\begin{figure}[!htp]
    \centering
    \includegraphics[scale = .4]{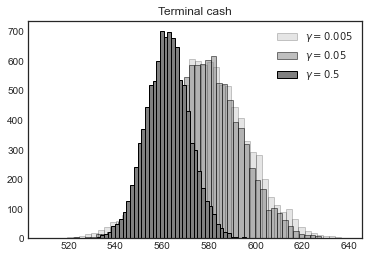}
    \includegraphics[scale = .4]{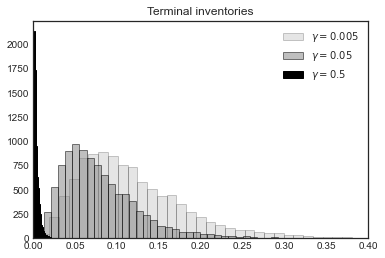}
    \includegraphics[scale = .4]{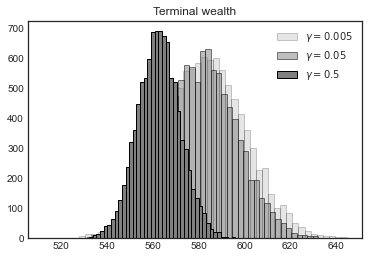}
    \caption{Histograms of $X_T^*,\, Q_T^*$ and $w_T^* = X^*_T + Q^*_T S_T$ for three different values of $\gamma,$ \textit{ceteris paribus}, resulting from the $10^4$ simulations. We fixed $A=3$ and $\phi = 0.75.$ We present the averages and corresponding standard deviations that we computed from our simulations in Table \ref{tab:AvgsAndStdsGammas}.}
    \label{fig:HistogramForGamma}
\end{figure}

\begin{table}[!htp]
\centering
\begin{tabular}{@{}cccc@{}}
\toprule
$\gamma$   & $\mathbb{E}\left[ X^*_T \right]$ & $\mathbb{E}\left[ Q^*_T \right]$ & $\mathbb{E}\left[ w^*_T \right]$ \\ \midrule
$ 5\times 10^{-3}$ & $ 578.446\,(17.587 )$               & $ 0.118\,(0.066 )$                  & $583.184 \,(17.542 )$               \\
$5\times 10^{-2}$  & $578.122 \,(15.612 )$               & $ 0.080\,(0.047 )$                  & $ 581.335\,(15.601 )$               \\
$5\times 10^{-1}$        & $ 562.211\,(9.368 )$                 & $ 0.004\,(0.004 )$               & $ 562.366\,(9.370 )$                \\ \bottomrule
\end{tabular}
\caption{Average values of $X^*_T,\, Q^*_T$ and $w_T^* = X^*_T + Q^*_T S_T,$ computed over all $10^4$ paths we simulated, for some values of $\gamma.$ Here, we fixed $A=3$ and $\phi = 0.75.$ We have put the corresponding standard deviations within parentheses. }
\label{tab:AvgsAndStdsGammas}
\end{table}

\begin{figure}[!htp]
    \centering
    \includegraphics[scale = .4]{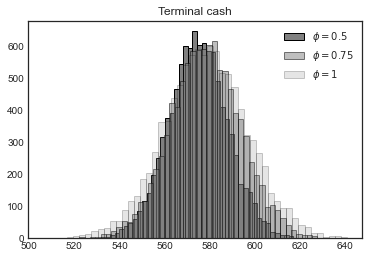}
    \includegraphics[scale = .4]{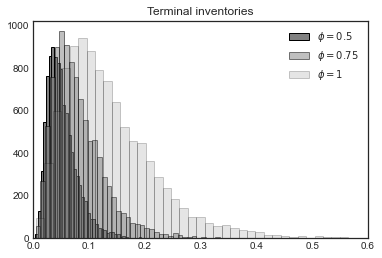}
    \includegraphics[scale = .4]{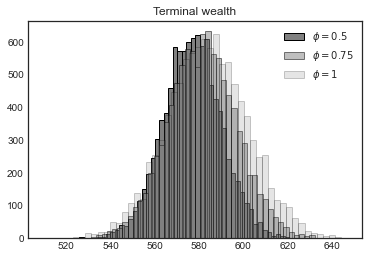}
    \caption{Histograms of $X_T^*,\, Q_T^*$ and $w_T^*$ for three different values of $\phi,$ \textit{ceteris paribus}, resulting from the $10^4$ simulations. We fixed $A=3$ and $\gamma = 0.05.$ We notice that, if we increase $\phi$ and maintain all else equal, the behavior of the trader is to slow down (cf. Figure \ref{fig:phi}). Therefore, she actually takes less impact over the trading schedule, accumulating slightly higher revenues, but has the downside of reaching terminal time holding a larger inventory position. We present the average values resulting from these simulations, as well as their corresponding standard deviations, in Table \ref{tab:AvgsAndStdsPhis}.}
    \label{fig:HistogramsForPhi}
\end{figure}

\begin{table}[!htp]
\centering
\begin{tabular}{@{}cccc@{}}
\toprule
$\phi$ & $\mathbb{E}\left[ X^*_T \right]$ & $\mathbb{E}\left[ Q^*_T \right]$ & $\mathbb{E}\left[ w^*_T \right]$ \\ \midrule
$5 \times 10^{-1}$  & $ 574.683\,( 13.083)$               & $ 0.052\,(0.028 )$                  & $ 576.751\,( 13.097)$                \\
$7.5 \times 10^{-1}$ & $ 578.122\,( 15.612)$               & $ 0.080\,( 0.047)$                   & $ 581.335\,( 15.601)$                \\
$1$    & $ 578.215\,( 18.377)$               & $ 0.137\,( 0.084)$                   & $ 583.694\,( 18.238)$                \\ \bottomrule
\end{tabular}
\caption{Average values of $X^*_T,\, Q^*_T$ and $w_T^*,$ computed over all $10^4$ paths we simulated, for some values of $\phi.$ We fixed $A=3$ and $\gamma = 0.05.$ We have put the corresponding standard deviations within parentheses. }
\label{tab:AvgsAndStdsPhis}
\end{table}

\begin{figure}[!htp]
    \centering
    \includegraphics[scale = .5]{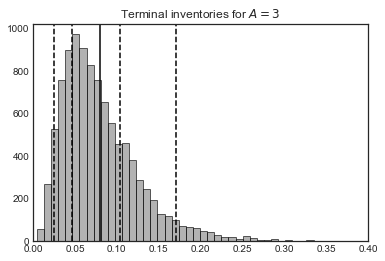}
    \includegraphics[scale = .5]{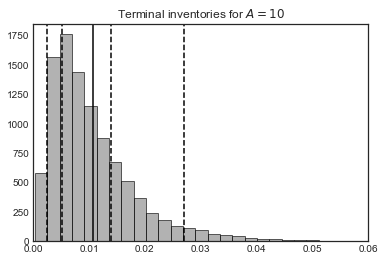}
    \caption{Histograms of the terminal optimal inventory holdings $Q^*_T$ for the values of $A\in \left\{3,10\right\}.$ We fixed $\phi = 0.75$ and $\gamma  = 0.05.$ The dashed lines represent, from the left to the right within each panel, the $5\%,\, 25\%,\, 75\%$ and $95\%$ quantiles. Each solid vertical line lies on the respective average value, taken over all paths we simulated.} 
    \label{fig:HistogramForA}
\end{figure}

\section{Analysis of the constrained problem} \label{sec:singular}

\subsection{The singular limit relative to the terminal penalization parameter}

From now on, corresponding to each $A$ satisfying (\textbf{H3}), let us denote the solution of \eqref{eq:ActualMainPDE} by $z^A.$ We also write $\left\{\nu^{*A}_t\right\}_{0\leqslant t \leqslant T} \in \mathcal{U}_0$ and $\left\{ Q^{*A}_t \right\}_{0 \leqslant t \leqslant T}$ to represent the optimal strategy and inventory holdings, respectively, corresponding to $z=z^A.$ Finally, we set $J_A^\nu(q,\boldsymbol{y}) := z^A(0,\boldsymbol{y})|q|^{1+\phi},$ i.e., $J_A^\nu(q,\boldsymbol{y})$ is the objective criteria \eqref{eq:PerfCrit} associated to the parameter $A,$ $t=0,$ and the strategy $\nu \in \mathcal{U}_0.$ We observe that
$$
J_A^\nu(q,\boldsymbol{y}) = J^\nu(q,\boldsymbol{y}) \hspace{1.0cm} \left(\nu \in \mathcal{U}_c\right).
$$
We refer to Subsection \ref{subsec:PerfCritSing} for the definitions of the performance criteria $J_{\infty}^\nu,$ as well as the set of admissible controls for the constrained problem $\mathcal{U}_c.$ We define the value function
$$
J_{\infty} := \sup_{\nu \in \mathcal{U}_c} J_{\infty}^\nu.
$$
In the subsequent result, we will use another monotonicity argument to derive asymptotic properties of the solution of the PDE \eqref{eq:ActualMainPDE}, as $A \rightarrow \infty.$

\begin{lemma} \label{lem:MonotonicityInA}
Given $(t,\boldsymbol{y}) \in \left[0,T\right]\times \mathbb{R}^d,$ the mapping $A \mapsto z^A(t,\boldsymbol{y})$ is strictly decreasing.
\end{lemma}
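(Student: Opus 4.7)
The plan is to exploit the verification result (Theorem \ref{thm:OptimalProcs}), which identifies $z^{A}(t,\boldsymbol{y})$ with $J_{A}(t,q,\boldsymbol{y})/|q|^{1+\phi}$ for $q\neq 0$, translating the claim into a comparison of value functions and, ultimately, into a statement about the optimal terminal inventory.

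Fix $A_{1}<A_{2}$, both satisfying \textbf{(H3)}. Since $A$ enters the functional $J^{\nu}$ only through the terminal penalization $-A|Q_{T}^{\nu}|^{1+\phi}$, for every $\nu\in\mathcal{U}_{t}$ and every $q\in\mathbb{R}$ I would write
$$
J_{A_{2}}^{\nu}(t,q,\boldsymbol{y})=J_{A_{1}}^{\nu}(t,q,\boldsymbol{y})-(A_{2}-A_{1})\,\mathbb{E}_{t,q,\boldsymbol{y}}\!\left[|Q_{T}^{\nu}|^{1+\phi}\right]\leqslant J_{A_{1}}^{\nu}(t,q,\boldsymbol{y}).
$$
Taking the supremum over $\nu$ and invoking Theorem \ref{thm:OptimalProcs} yields, for any $q\neq 0$, the non-strict comparison $z^{A_{2}}(t,\boldsymbol{y})\leqslant z^{A_{1}}(t,\boldsymbol{y})$. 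At $t=T$ the inequality is strict by definition of the terminal data, so only $t<T$ remains.

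To upgrade to strict monotonicity for $t<T$, I would test the $A_{1}$-functional against the optimizer $\nu^{*A_{2}}$ of the $A_{2}$-problem, obtaining
$$
z^{A_{1}}(t,\boldsymbol{y})|q|^{1+\phi}\geqslant J_{A_{1}}^{\nu^{*A_{2}}}(t,q,\boldsymbol{y})=z^{A_{2}}(t,\boldsymbol{y})|q|^{1+\phi}+(A_{2}-A_{1})\,\mathbb{E}_{t,q,\boldsymbol{y}}\!\left[|Q_{T}^{*A_{2}}|^{1+\phi}\right].
$$
Choosing any $q\neq 0$, the claim then reduces to the positivity statement $\mathbb{E}_{t,q,\boldsymbol{y}}\!\left[|Q_{T}^{*A_{2}}|^{1+\phi}\right]>0$, which is the only substantive point.

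This positivity I would read off the explicit representation \eqref{eq:OptimalInventory} for the optimal inventory, which extends by the Markov property to arbitrary starting times and gives $|Q_{T}^{*A_{2}}|=|q|\exp\!\bigl(-\int_{t}^{T}(-z^{A_{2}}(u,\boldsymbol{y}_{u})/\kappa(\boldsymbol{y}_{u}))^{1/\phi}\,du\bigr)$. Using the upper bound $|z^{A_{2}}|\leqslant|\underline{z}|$ from \eqref{eq:BoundsOnSubSuper}, the sharp blow-up rate \eqref{eq:SubAndSuperBlowupRate}, and the lower bound on $\kappa$ from \textbf{(H2)}, the integrand is dominated pathwise by $C\,(A_{2}^{-1/\phi}+T-u)^{-1}$, whose integral on $[t,T]$ is finite. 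Hence the exponential is strictly positive almost surely, so $|Q_{T}^{*A_{2}}|>0$ a.s.\ and its $(1+\phi)$-moment is strictly positive, closing the argument. The only mild delicacy is controlling the $u\to T$ singularity of the integrand, for which the blow-up estimate \eqref{eq:SubAndSuperBlowupRate} is precisely tailored; everything else is a clean rearrangement of the definition of $J^{\nu}$ together with the verification identity.
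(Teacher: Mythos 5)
Your proof is correct, but it follows a genuinely different route from the paper's. The paper stays entirely at the PDE level: it sets $\delta^A := (A-A')(z^A - z^{A'})$, observes that $\delta^A$ solves a \emph{linear} parabolic equation whose zeroth-order coefficient $g^{A,A'}$ is the (bounded, continuous) difference quotient of the Hamiltonian $\kappa H(\cdot/\kappa)$ evaluated at $z^A$ and $z^{A'}$, and with strictly negative terminal datum $-(A-A')^2$; the Feynman--Kac representation of Theorem \ref{thm:Viscosity} then gives $\delta^A<0$ pointwise, i.e.\ $z^A>z^{A'}$. You instead argue through the stochastic-control interpretation: the verification identity $z^A(t,\boldsymbol{y})\,|q|^{1+\phi}=\sup_\nu J_A^\nu(t,q,\boldsymbol{y})$, the decomposition of $J_{A_1}^\nu-J_{A_2}^\nu$ as $(A_2-A_1)\,\mathbb{E}\bigl[|Q_T^\nu|^{1+\phi}\bigr]$, testing against the $A_2$-optimizer, and strict positivity of $\mathbb{E}\bigl[|Q_T^{*A_2}|^{1+\phi}\bigr]$ via the explicit inventory formula \eqref{eq:OptimalInventory}. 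Each step checks out (the verification result and \eqref{eq:OptimalInventory} do extend to arbitrary initial times, and $\nu^{*A_2}$ is admissible since it is bounded), with one harmless overkill: for fixed finite $A_2$ the crude bound $|z^{A_2}|\leqslant A_2$ from \eqref{eq:BoundsOnSubSuper}, together with $\kappa\geqslant\underline{\kappa}$, already makes the exponent in \eqref{eq:OptimalInventory} finite, so the sharp blow-up rate \eqref{eq:SubAndSuperBlowupRate} is not needed here. As for what each approach buys: the paper's argument is more self-contained, requiring only boundedness and continuity of the two solutions plus the comparison machinery of Subsection \ref{subsec:HypothesesAndPrevResults}, and is the natural companion to the iterative scheme; yours leans on the verification theorem but yields a transparent financial reading and a quantitative gap, namely $z^{A_1}(t,\boldsymbol{y})-z^{A_2}(t,\boldsymbol{y})\geqslant (A_2-A_1)\,\mathbb{E}_{t,q,\boldsymbol{y}}\bigl[|Q_T^{*A_2}|^{1+\phi}\bigr]/|q|^{1+\phi}$, the right-hand side being independent of $q$ — structurally parallel to the exact identity $z^A-z^{A'}=(A'-A)\,\mathbb{E}\bigl[e^{\int_t^T g^{A,A'}\,d\tau}\bigr]$ that the paper's Feynman--Kac representation provides.
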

\begin{proof}
Let us consider $ A < A^\prime $ (both constrained to (\textbf{H3})). We set $\delta^A := (A-A^\prime)(z^A - z^{A^\prime}).$ We introduce the function
$$
g^{A,A^\prime}(t,\boldsymbol{y}):=
\begin{cases}
\kappa(\boldsymbol{y}) \left( \frac{H\left(z^A(t,\boldsymbol{y})/\kappa( \boldsymbol{y})\right) - H\left(z^{A^\prime}(t,\boldsymbol{y})/\kappa( \boldsymbol{y})\right)}{z^A(t,\boldsymbol{y}) - z^{A^\prime}(t,\boldsymbol{y})(t,\boldsymbol{y})} \right) &\text{ if } z^A(t,\boldsymbol{y}) \neq z^{A^\prime}(t,\boldsymbol{y}),\\
H^\prime\left( z^A(t,\boldsymbol{y})/\kappa( \boldsymbol{y}) \right) &\text{ otherwise }.
\end{cases}
$$
We observe that the function $g^{A,A^\prime}$ is continuous and bounded; hence, it is straightforward to check that $\delta^A$ the unique bounded and continuous viscosity solution of the PDE
$$
\begin{cases}
    \partial_t \delta^A + \mathcal{L}\delta^A +g^{A,A^\prime} \delta^A = 0 &\text{ in } \left]0,T\right[\times \mathbb{R}^d,\\
    \delta^A|_{t=T} = -(A-A^\prime)^2 &\text{ in } \mathbb{R}^d.
\end{cases}
$$
We fix $(t,\boldsymbol{y}) \in \left[0,T\right] \times \mathbb{R}^d$ arbitrarily. We can apply Theorem \ref{thm:Viscosity} to represent $\delta^A$ in the form
$$
\delta^A(t,\boldsymbol{y}) = \mathbb{E}_{t,\boldsymbol{y}}\left[ - (A - A^\prime)^2 e^{\int_t^T g^{A,A^\prime}(\tau,\boldsymbol{y}_\tau)\,d\tau} \right] < 0,
$$
which is clearly equivalent to
$$
z^A(t,\boldsymbol{y}) > z^{A^\prime}(t,\boldsymbol{y}).
$$
This proves the Lemma.
\end{proof}

As a consequence of Lemma \ref{lem:MonotonicityInA}, we can prove that the limit $z^\infty$ of the sequence of functions $\left\{ z^A \right\}_{A}$ is a viscosity solution of the singular problem.
\begin{corollary} \label{cor:DefnOfHinfty}
The function $z^\infty : \left[0,T\right[\times \mathbb{R}^d \rightarrow \mathbb{R}$ defined as
\begin{equation} \label{eq:Limitof_z_inA}
    z^\infty(t,\boldsymbol{y}) := \lim_{A \rightarrow \infty}z^A(t,\boldsymbol{y}) \hspace{1.0cm} \left( (t,\boldsymbol{y}) \in \left[0,T\right[\times \mathbb{R}^d \right)
\end{equation} 
is subject to
\begin{equation} \label{eq:TerminalCondn_zInfty}
    \frac{1}{C(T-t)^\phi}\leqslant \left| z^\infty(t,\,\boldsymbol{y)} \right| \leqslant \frac{C}{(T-t)^\phi} \hspace{1.0cm} \left( (t,\boldsymbol{y}) \in \left[0,T\right[\times \mathbb{R}^d \right),
\end{equation}
and it is a viscosity solution of
\begin{equation} \label{eq:PDE_zInfty}
    \begin{cases}
        \partial_t z^\infty + \mathcal{L}z^\infty + \kappa H\left(z^\infty/\kappa\right) -  \gamma \sigma^{1+\phi} = 0 &\text{ in } \left[0,T\right[\times \mathbb{R}^d,\\
        z^\infty(t,\boldsymbol{y}) \rightarrow - \infty, &\text{ as } t \uparrow T.
    \end{cases}
\end{equation}
Moreover, if $z^\infty$ is continuous, then the limit \eqref{eq:Limitof_z_inA} is uniform over compact subsets of $\left[0,T\right[\times \mathbb{R}^d.$
\end{corollary}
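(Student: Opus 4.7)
The plan is to carry out the proof in four steps.

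First, I would establish the existence of the pointwise limit $z^\infty$. Lemma \ref{lem:MonotonicityInA} yields strict monotonicity of $A \mapsto z^A$, and the lower bound $z^A \geqslant \underline{z}^A \geqslant -C/(A^{-1/\phi} + T - t)^\phi$ from \eqref{eq:SubAndSuperBlowupRate} shows that this family is uniformly bounded below on each set of the form $[0,T-\varepsilon] \times \mathbb{R}^d$. Monotone convergence then produces $z^\infty$ as a finite pointwise limit on $[0,T[ \times \mathbb{R}^d$, which is also non-positive since every $z^A$ is.

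Second, the two-sided bound \eqref{eq:TerminalCondn_zInfty} follows by letting $A \to \infty$ in the sandwich $\overline{z}^A(t) \leqslant z^A(t, \boldsymbol{y}) \leqslant \underline{z}^A(t) \leqslant 0$ (the inequalities on absolute values reverse since both sides are non-positive), combined with the asymptotic estimates \eqref{eq:SubAndSuperBlowupRate}. The terminal blow-up $z^\infty(t,\boldsymbol{y}) \to -\infty$ as $t \uparrow T$ is then immediate from the lower bound $|z^\infty(t,\boldsymbol{y})| \geqslant 1/(C(T-t)^\phi)$.

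The third step establishes the viscosity property, and I expect it to be the main obstacle, precisely because $z^\infty$ is not known a priori to be continuous (which is why the last sentence of the statement is conditional). The natural tool is the method of semi-relaxed limits of Barles and Perthame. Each $z^A$ is a continuous viscosity solution of the semilinear equation on $]0,T[ \times \mathbb{R}^d$ by Theorem \ref{thm:MainPDE}, and $A \mapsto z^A$ is monotonically decreasing. Standard arguments then show that the upper semi-relaxed limit $z^*$ coincides with $z^\infty$ itself (which is therefore USC, as an infimum of continuous functions), while the lower semi-relaxed limit $z_*$ coincides with its LSC envelope $(z^\infty)_*$. Since the nonlinear term $\kappa H(z/\kappa)$ depends on $z$ only pointwise (the ansatz \eqref{eq:Ansatz} having decoupled the $q$-dependence), and since the coefficients $\boldsymbol{\alpha}$, $\boldsymbol{\beta}$, $\kappa$, $\sigma$ are continuous, the standard stability theorem for viscosity solutions yields that $z^\infty$ is a subsolution and $(z^\infty)_*$ is a supersolution of \eqref{eq:PDE_zInfty} in the sense of Ishii. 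Concretely: given a smooth $\varphi$ with a strict local maximum of $z^\infty - \varphi$ at $(t_0, \boldsymbol{y}_0) \in [0,T[ \times \mathbb{R}^d$, one extracts $(t_A, \boldsymbol{y}_A) \to (t_0, \boldsymbol{y}_0)$ that are local maxima of $z^A - \varphi$ and satisfy $z^A(t_A, \boldsymbol{y}_A) \to z^\infty(t_0, \boldsymbol{y}_0) = \varphi(t_0, \boldsymbol{y}_0)$; the viscosity subsolution inequality $\partial_t \varphi(t_A, \boldsymbol{y}_A) + \mathcal{L}\varphi(t_A, \boldsymbol{y}_A) + \kappa(\boldsymbol{y}_A) H(z^A(t_A, \boldsymbol{y}_A)/\kappa(\boldsymbol{y}_A)) - \gamma \sigma^{1+\phi}(\boldsymbol{y}_A) \geqslant 0$ then passes to the limit by continuity. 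The supersolution half is analogous.

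Finally, if $z^\infty$ is continuous, then $\{z^A\}_A$ is a monotonically decreasing net of continuous functions converging pointwise to a continuous limit, and Dini's theorem delivers uniform convergence on every compact subset of $[0,T[ \times \mathbb{R}^d$.
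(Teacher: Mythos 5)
Your proposal is correct and follows essentially the same route as the paper: the pointwise limit exists by the monotonicity of Lemma \ref{lem:MonotonicityInA} together with the $A$-independent barrier bounds, the estimates \eqref{eq:TerminalCondn_zInfty} follow from \eqref{eq:SubAndSuperBlowupRate}, the viscosity property is obtained by the Barles--Perthame relaxed semi-limits stability argument (the paper invokes exactly this via \cite[Theorem 6.8]{touzi2012optimal}), and Dini's theorem gives the locally uniform convergence under continuity. The only blemish is notational: the correct ordering is $\underline{z} \leqslant z^A \leqslant \overline{z} \leqslant 0$ rather than the reversed sandwich you wrote, but since both barriers satisfy the same two-sided estimate \eqref{eq:SubAndSuperBlowupRate} with a constant independent of $A$, your argument is unaffected.
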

\begin{proof}
The function $z^\infty$ is well-defined by Lemma \ref{lem:MonotonicityInA}, and the relations \eqref{eq:TerminalCondn_zInfty} follow from \eqref{eq:SubAndSuperBlowupRate}. We can show that it is a (possibly discontinuous) viscosity solution of \eqref{eq:PDE_zInfty} using standard stability arguments, such as in \cite[Theorem 6.8]{touzi2012optimal}. If $z^\infty$ is continuous, then Dini's Theorem implies that the convergence \eqref{eq:Limitof_z_inA} is in fact uniform over compact subsets of $\left[0,T\right[\times \mathbb{R}^d.$
\end{proof}

We write $\left\{ \nu^\infty_t\right\}_{0\leqslant t \leqslant T}$ to denote the control given in feedback form by
$$
\nu^\infty(t,q,\boldsymbol{y}) = -\left( - \frac{z^\infty(t,\boldsymbol{y})}{ \kappa(\boldsymbol{y})} \right)^{\frac{1}{\phi}}q,
$$
and by $\left\{ Q^\infty_t \right\}_{0\leqslant t \leqslant T}$ its corresponding inventory process. 

\begin{remark}\label{rem:InvAndRateInfinity}
Proceeding as in Theorem \ref{thm:OptimalProcs}, we derive
$$
Q^\infty_t = q \exp\left( - \int_0^t \left( - \frac{z^\infty(u,\boldsymbol{y}_u)}{ \kappa(\boldsymbol{y}_u)} \right)^{\frac{1}{\phi}} \, du \right),
$$
as well as 
$$
\nu^\infty_t = - q \left( - \frac{z^\infty(t,\boldsymbol{y}_t)}{ \kappa(\boldsymbol{y}_t)} \right)^{\frac{1}{\phi}} \exp\left( - \int_0^t \left( - \frac{z^\infty(u,\boldsymbol{y}_u)}{ \kappa(\boldsymbol{y}_u)} \right)^{\frac{1}{\phi}} \, du \right).
$$
We emphasize that, from the definition of $z^\infty$ in Corollary \ref{cor:DefnOfHinfty}, together with the fact that $z^A \leqslant 0,$ we infer that $z^\infty \leqslant 0.$ Therefore, the limiting strategy $\nu^\infty$ does not lead to price manipulation: $q\,\nu^\infty_t \leqslant 0$ and $q\,Q^\infty_t \geqslant 0,$ for $0 \leqslant t \leqslant T.$
\end{remark}

\subsection{The solution of the constrained problem}

Next, we proceed to prove some convergence results.
\begin{lemma} \label{lem:ConvergencesInvAndRateA}
The following limits hold $\mathbb{P}-$a.s., as $A \rightarrow \infty:$
$$
\begin{cases}
qQ^{*A}_t \downarrow qQ^\infty_t, &\text{ for } 0\leqslant t \leqslant T; \\
\nu^{*A}_t \rightarrow \nu^\infty_t, &\text{ for } 0\leqslant t \leqslant T.
\end{cases}
$$
\end{lemma}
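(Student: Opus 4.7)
The plan is to exploit the explicit representation formulas for $Q^{*A}_t$ and $\nu^{*A}_t$ from Theorem \ref{thm:OptimalProcs}, together with Remark \ref{rem:InvAndRateInfinity}, and reduce everything to pathwise scalar arguments using the monotone convergence $z^A \downarrow z^\infty$ of Lemma \ref{lem:MonotonicityInA}. All convergences will be argued $\omega$-by-$\omega$, so I fix a generic path of $\boldsymbol{y}$ throughout.

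Step one is to establish the monotonicity $qQ^{*A}_t \downarrow$. Since $z^A \leqslant 0$ and $z^A \downarrow z^\infty$ pointwise in $A$, the integrand $(-z^A(u,\boldsymbol{y}_u)/\kappa(\boldsymbol{y}_u))^{1/\phi}$ is non-negative and non-decreasing in $A$. Hence the exponential factor in \eqref{eq:OptimalInventory} is non-increasing in $A$, and since $qQ^{*A}_t = q^2\exp(\,\cdot\,)\geqslant 0$, the family $\{qQ^{*A}_t\}_A$ is monotone non-increasing, as claimed.

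Step two is to identify the limit as $qQ^\infty_t$. Applying the monotone convergence theorem to the Lebesgue integral $\int_0^t(-z^A/\kappa)^{1/\phi}\,du$ along the fixed path gives
\begin{equation*}
\int_0^t\left(-\frac{z^A(u,\boldsymbol{y}_u)}{\kappa(\boldsymbol{y}_u)}\right)^{1/\phi}du \;\uparrow\; \int_0^t\left(-\frac{z^\infty(u,\boldsymbol{y}_u)}{\kappa(\boldsymbol{y}_u)}\right)^{1/\phi}du.
\end{equation*}
For $t\in[0,T[$, the limiting integral is finite thanks to the upper bound $|z^\infty|^{1/\phi}\leqslant C/(T-u)$ from \eqref{eq:TerminalCondn_zInfty}, which is Lebesgue-integrable on $[0,t]$; continuity of the exponential and the formula in Remark \ref{rem:InvAndRateInfinity} then give $qQ^{*A}_t\downarrow qQ^\infty_t$. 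For the endpoint $t=T$, the lower bound $|z^\infty|^{1/\phi}\geqslant C^{-1}/(T-u)$ makes the limiting integral diverge a.s., whence $Q^\infty_T=0$; independently, the a priori bound \eqref{eq:InvBound} forces $Q^{*A}_T\to 0$ as $A\to\infty$, reconciling the two sides.

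Step three handles the trading rate. For any $t\in[0,T[$, $z^A(t,\boldsymbol{y}_t)\to z^\infty(t,\boldsymbol{y}_t)$ pathwise by the very definition of $z^\infty$ in Corollary \ref{cor:DefnOfHinfty}, so $\left(-z^A(t,\boldsymbol{y}_t)/\kappa(\boldsymbol{y}_t)\right)^{1/\phi}$ converges to its counterpart with $z^\infty$; combining this with the convergence of $Q^{*A}_t$ established in Step two and using $\nu^{*A}_t=-\left(-z^A(t,\boldsymbol{y}_t)/\kappa(\boldsymbol{y}_t)\right)^{1/\phi}Q^{*A}_t$ yields $\nu^{*A}_t\to\nu^\infty_t$. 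The main obstacle I anticipate is the endpoint $t=T$ for $\nu$: there $(-z^A(T,\boldsymbol{y}_T)/\kappa)^{1/\phi}=(A/\kappa(\boldsymbol{y}_T))^{1/\phi}\uparrow\infty$ while $Q^{*A}_T\downarrow 0$, producing an indeterminate product. Resolving this requires a sharper asymptotic matching of the two rates — plausibly by substituting the bound \eqref{eq:InvBound} into the product and exploiting that the blow-up rate of $|z^\infty|$ near $T$ is exactly $(T-u)^{-\phi}$ so that $(A/\kappa)^{1/\phi}$ is dominated by the decay of the exponential factor — or equivalently by interpreting $\nu^\infty_T$ through a left limit; in either case, this endpoint is the only non-routine part of the argument.
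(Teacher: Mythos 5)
Your argument is essentially the paper's own proof: both rest on the explicit formulas of Theorem \ref{thm:OptimalProcs} and Remark \ref{rem:InvAndRateInfinity}, the monotonicity $z^A \downarrow z^\infty$ from Lemma \ref{lem:MonotonicityInA}, the Monotone Convergence Theorem applied pathwise to the exponent, continuity of the exponential, and then the product of the two convergences for $\nu^{*A}_t$. The endpoint $t=T$ for the trading rate that you flag (the indeterminate product $(A/\kappa)^{1/\phi}\cdot Q^{*A}_T$, with $\nu^\infty_T$ only meaningful as a left limit) is a genuine subtlety, but the paper's proof passes over it with a ``similarly'' and it is immaterial for the subsequent use of the lemma, since strategies are only identified up to $dt\times d\mathbb{P}$-null sets; so your treatment is, if anything, more careful than the original.
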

\begin{proof}
By Theorem \ref{thm:OptimalProcs}, the Monotone Convergence Theorem and Remark \ref{rem:InvAndRateInfinity}, we have
$$
qQ^{*A}_t = q^2 \exp\left( - \int_0^t \left( - \frac{z^A(u,\boldsymbol{y}_u)}{ \kappa(\boldsymbol{y}_u)} \right)^{\frac{1}{\phi}} \, du \right) \downarrow q^2 \exp\left( - \int_0^t \left( - \frac{z^\infty(u,\boldsymbol{y}_u)}{ \kappa(\boldsymbol{y}_u)} \right)^{\frac{1}{\phi}} \, du \right) = qQ^\infty_t,
$$
for $0 \leqslant t \leqslant T.$ Similarly,
$$
\nu^{*A}_t = -\left( - \frac{z^A(t,\boldsymbol{y}_t)}{ \kappa(\boldsymbol{y}_t)} \right)^{\frac{1}{\phi}} Q^{*A}_t \xrightarrow{A \rightarrow \infty} -\left( - \frac{z^\infty(t,\boldsymbol{y}_t)}{ \kappa(\boldsymbol{y}_t)} \right)^{\frac{1}{\phi}} Q^{\infty}_t.
$$
\end{proof}

We finish this section by proving that $\left\{ \nu^\infty_t \right\}_{0\leqslant t \leqslant T}$ is the optimal trading rate for the constrained control problem.
\begin{theorem}
The process $\left\{ \nu^\infty_t \right\}_{0\leqslant t \leqslant T}$ belongs to $\mathcal{U}_c,$ and it is the optimal control for the constrained problem.
\end{theorem}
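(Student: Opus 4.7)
The first task is to check $\nu^\infty \in \mathcal{U}_c$. Progressive measurability is free: $z^\infty$ is the pointwise (monotone in $A$) limit of the continuous functions $z^A$ by Lemma \ref{lem:MonotonicityInA}, hence Borel measurable, so $\nu^\infty$ is a Borel function of $(t,\boldsymbol{y}_t)$ times the adapted process $Q^\infty$. For the execution constraint $Q^\infty_T = 0$, I would use the lower blowup estimate $|z^\infty(t,\boldsymbol{y})| \geqslant C(T-t)^{-\phi}$ from Corollary \ref{cor:DefnOfHinfty} together with $\kappa \leqslant \overline{\kappa}$: these force $\int_0^T \bigl(|z^\infty(u,\boldsymbol{y}_u)|/\kappa(\boldsymbol{y}_u)\bigr)^{1/\phi}\,du = +\infty$ pathwise, so the exponential representation in Remark \ref{rem:InvAndRateInfinity} yields $Q^\infty_T = 0$. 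The $L^{1+\phi}$ bound on $\nu^\infty$ will drop out of the energy estimate produced in the last step.

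\textbf{Upper bound over $\mathcal{U}_c$.} For any $\nu \in \mathcal{U}_c$ the terminal penalty vanishes (since $Q^\nu_T = 0$), so $J^\nu_\infty(q,\boldsymbol{y}) = J^\nu(0,q,\boldsymbol{y})$. I would bound this by the regularized value $J(0,q,\boldsymbol{y}) = z^A(0,\boldsymbol{y})|q|^{1+\phi}$ via the verification argument, then let $A \to \infty$ and invoke Corollary \ref{cor:DefnOfHinfty} to obtain $J_\infty^\nu(q,\boldsymbol{y}) \leqslant z^\infty(0,\boldsymbol{y})|q|^{1+\phi}$. This step requires a touch of care because elements of $\mathcal{U}_c$ are only in $\mathbb{L}^{1+\phi}$, not in $L^2$ as required by $\mathcal{U}_0$: one may either rerun the verification argument of \ref{app:Verification} by applying It\^o to $z^A(t,\boldsymbol{y}_t)|Q^\nu_t|^{1+\phi}$ on $[0,T-\varepsilon]$ and passing $\varepsilon \downarrow 0$ using the bounds on $z^A$, or approximate $\nu$ by truncations lying in $\mathcal{U}_0$.

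\textbf{Lower bound via Fatou.} Finally I would show that $\nu^\infty$ attains the upper bound. At each $A$ the verification identity reads
$$
z^A(0,\boldsymbol{y})|q|^{1+\phi} \;=\; -\mathbb{E}\!\left[\int_0^T\!\bigl\{\kappa(\boldsymbol{y}_t)|\nu^{*A}_t|^{1+\phi} + \gamma \sigma^{1+\phi}(\boldsymbol{y}_t)|Q^{*A}_t|^{1+\phi}\bigr\}\,dt\right] - A\,\mathbb{E}|Q^{*A}_T|^{1+\phi},
$$
and dropping the nonpositive terminal term yields $\mathbb{E}\int_0^T\{\kappa|\nu^{*A}_t|^{1+\phi} + \gamma\sigma^{1+\phi}|Q^{*A}_t|^{1+\phi}\}\,dt \leqslant -z^A(0,\boldsymbol{y})|q|^{1+\phi}$. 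The pathwise convergences $\nu^{*A}\to\nu^\infty$ and $Q^{*A}\to Q^\infty$ from Lemma \ref{lem:ConvergencesInvAndRateA}, together with Fatou's lemma and the monotone limit $z^A(0,\boldsymbol{y}) \downarrow z^\infty(0,\boldsymbol{y})$, then give
$$
\mathbb{E}\!\int_0^T\bigl\{\kappa(\boldsymbol{y}_t)|\nu^\infty_t|^{1+\phi} + \gamma\sigma^{1+\phi}(\boldsymbol{y}_t)|Q^\infty_t|^{1+\phi}\bigr\}\,dt \;\leqslant\; -z^\infty(0,\boldsymbol{y})|q|^{1+\phi}.
$$
Since $\kappa \geqslant \underline{\kappa}>0$, this simultaneously delivers $\nu^\infty \in \mathbb{L}^{1+\phi}$ (completing the admissibility check) and $J_\infty^{\nu^\infty}(q,\boldsymbol{y}) \geqslant z^\infty(0,\boldsymbol{y})|q|^{1+\phi}$. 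Combined with the previous paragraph, $\nu^\infty$ is optimal and $J_\infty(q,\boldsymbol{y}) = z^\infty(0,\boldsymbol{y})|q|^{1+\phi}$. I expect the main obstacle to be precisely the upper-bound step, where the $\mathbb{L}^{1+\phi}$-versus-$L^2$ discrepancy between $\mathcal{U}_c$ and $\mathcal{U}_0$ forces either a localized It\^o argument exploiting the blowup rate of $z^A$ or a careful truncation; the Fatou step itself is routine once Lemma \ref{lem:ConvergencesInvAndRateA} is in hand.
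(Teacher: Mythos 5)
Your argument is correct and shares the paper's skeleton (optimality of $\nu^{*A}$ for the regularized problem, the monotone limit $z^A \downarrow z^\infty$, the pathwise convergences of Lemma \ref{lem:ConvergencesInvAndRateA}, and Fatou), but several sub-steps are genuinely different. For the execution constraint you argue pathwise: the lower blow-up bound in \eqref{eq:TerminalCondn_zInfty} makes $\int_0^T\bigl(|z^\infty(u,\boldsymbol{y}_u)|/\kappa(\boldsymbol{y}_u)\bigr)^{1/\phi}du=+\infty$, so $Q^\infty_T=0$ a.s.\ directly from Remark \ref{rem:InvAndRateInfinity}; the paper instead deduces $\mathbb{E}\bigl[|Q^\infty_T|^{1+\phi}\bigr]=0$ from the estimate $A\,\mathbb{E}\bigl[|Q^{*A}_T|^{1+\phi}\bigr]\leqslant -m$, where $m$ is the value of the TWAP strategy. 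Similarly, your uniform-in-$A$ energy bound is $-z^A(0,\boldsymbol{y})|q|^{1+\phi}$, obtained by dropping the terminal penalty in the verification identity, whereas the paper uses the $A$-independent TWAP constant $m$; your choice is sharper and yields, as a byproduct, the explicit identification $J_\infty(q,\boldsymbol{y})=z^\infty(0,\boldsymbol{y})|q|^{1+\phi}$, which the paper's chain \eqref{eq:SingProblem1}--\eqref{eq:SingProblem2} never states. You are also right to flag the integrability mismatch in the upper-bound step: for $\phi<1$, $\mathbb{L}^{1+\phi}$-strategies need not be square-integrable, so the inclusion $\mathcal{U}_c\subseteq\mathcal{U}_0$ invoked by the paper in \eqref{eq:SingProblem1} is not literally available, and your proposed fixes (localized It\^o on $[0,T-\varepsilon]$ using the bounds on $z^A$, or truncating $\nu$ and passing to the limit by dominated convergence in the regularized criterion) both close this gap; in this respect your write-up is more careful than the paper's. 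The only item you omit is the final remark that the maximizer is unique, which the paper settles in one line by strict concavity of $\nu\mapsto J^\nu_\infty(q,\boldsymbol{y})$ on $\mathcal{U}_c$; if the article "the" in the statement is read as a uniqueness claim, add that observation.
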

\begin{proof}
We observe that $\mathcal{U}_c \neq \emptyset.$ In effect, an element in it is the TWAP strategy $\nu^{\text{TWAP}}:$
$$
\nu^{\text{TWAP}}_t := \frac{q}{T} \hspace{1.0cm} (0\leqslant t \leqslant T).
$$
We have
$$
J_\infty^{\nu^{\text{TWAP}}}(q,\boldsymbol{y}) = \mathbb{E}_{0,\boldsymbol{y}}\left[\int_0^T \left\{-\kappa(\boldsymbol{y}_t)\left( \frac{|q|}{T} \right)^{1+\phi} -  \gamma \sigma^{1+\phi}(\boldsymbol{y}_t)|q|^{1+\phi}\left(1 - \frac{t}{T}\right)^{1+\phi} \right\} \right] =: m > -\infty.
$$
We emphasize that $m$ is independent of $A.$ Given $\left\{ \nu_t \right\}_{0\leqslant t \leqslant T} \in \mathcal{U}_c \subseteq \mathcal{U}_0$ arbitrarily, it follows that
\begin{equation} \label{eq:SingProblem1}
    J^{\nu^{*A}}_A( q,\boldsymbol{y}) \geqslant J_A^{\nu}(q,\boldsymbol{y}) = J_\infty^{\nu}(q,\boldsymbol{y}).
\end{equation}
In particular,
$$
J_A^{\nu^{*A}}(q,\boldsymbol{y}) \geqslant m,
$$
whence
\begin{align*}
    -\limsup_{A\rightarrow \infty} \mathbb{E}\left[ \int_0^T \left| \nu^{*A}_t \right|^{1+\phi} \,dt \right] &= \liminf_{A\rightarrow \infty} \mathbb{E}\left[ -\int_0^T \left| \nu^{*A}_t \right|^{1+\phi} \,dt \right] \\
    &\geqslant \frac{1}{\underline{\kappa}}\liminf_{A\rightarrow \infty} \mathbb{E}\left[ -\int_0^T \kappa(\boldsymbol{y}_t)\left| \nu^{*A}_t \right|^{1+\phi} \,dt \right] \\
    &\geqslant \frac{1}{\underline{\kappa}}\liminf_{A \rightarrow \infty} J_A^{\nu^{*A}}(0,q,\boldsymbol{y}) \\
    &\geqslant \frac{m}{\underline{\kappa}}.
\end{align*}

Therefore, we can employ Lemma \ref{lem:ConvergencesInvAndRateA} and Fatou's Lemma to infer that
$$
\mathbb{E}\left[\int_0^T \left| \nu^\infty_t \right|^{1+\phi}\,dt \right]\leqslant \liminf_{A\rightarrow \infty} \mathbb{E}\left[ \int_0^T \left| \nu^{*A}_t \right|^{1+\phi} \,dt \right] \leqslant - \frac{m}{\underline{\kappa}} < \infty.
$$
This proves that $\left\{\nu^\infty_t \right\}_{0\leqslant t \leqslant T} \in \mathbb{L}^{1+\phi}.$ 

We now turn to the proof of the fact that $\left\{ \nu^\infty_t \right\}_t \in \mathcal{U}_c.$ To do this, we notice that
$$
A \mathbb{E}_{0,q,\boldsymbol{y}}\left[ \left| Q^A_T \right|^{1+\phi} \right] \leqslant - J^{\nu^{*A}}(0,q,\boldsymbol{y}) \leqslant -m, 
$$
whence, arguing as above, we obtain 
$$
\mathbb{E}_{0,q,\boldsymbol{y}}\left[ \left| Q^\infty_T \right|^{1+\phi} \right] \leqslant \liminf_{A \rightarrow \infty} \mathbb{E}_{0,q,\boldsymbol{y}}\left[ \left| Q^A_T \right|^{1+\phi} \right] = 0.
$$
In this way, we deduce that $Q^\infty_T = 0$ $\mathbb{P}-$a.s., from where it follows that $\left\{ \nu^\infty_t \right\}_t \in \mathcal{U}_c.$

Likewise, for each $\left\{ \nu_t\right\}_{0\leqslant t \leqslant T} \in \mathcal{U}_c,$ we once more employ Lemma \ref{lem:ConvergencesInvAndRateA} and Fatou's Lemma, now together with relation \eqref{eq:SingProblem1}, to infer the following
\begin{align} \label{eq:SingProblem2}
     \begin{split}
         J^{\nu^\infty}_\infty(q,\boldsymbol{y}) &\geqslant \limsup_{A\rightarrow \infty}\mathbb{E}_{0,q,\boldsymbol{y}}\left[-\int_0^T \left\{ \kappa(\boldsymbol{y}_t)\left| \nu^{*A}_t \right|^{1+\phi} + \gamma \sigma^{1+\phi}(\boldsymbol{y}_t)\left|Q^{*A}_t\right|^{1+\phi} \right\}\,dt \right] \\
         &\geqslant \limsup_{A\rightarrow \infty} J^{\nu^{*A}}_A(q,\boldsymbol{y})\\
         &\geqslant J^{\nu}_\infty(q,\boldsymbol{y}).
     \end{split}
\end{align}
Therefore, \eqref{eq:SingProblem2} implies
\begin{equation} \label{eq:ArgmaxSingProblem}
    \left\{ \nu^\infty_t \right\}_{t} \in \argmax_{\nu \in \mathcal{U}_c} J_\infty^{\nu}(q,\boldsymbol{y}).
\end{equation}
In fact, $\left\{ \nu^\infty_t \right\}_{t}$ is the unique solution of \eqref{eq:ArgmaxSingProblem}, since the functional $\nu \in \mathcal{U}_c \mapsto J_\infty^\nu(q,\boldsymbol{y}) \in \mathbb{R}$ is strictly concave.
\end{proof}

\section{Conclusions} \label{sec:Conclusions}

We investigated the problem of optimal portfolio execution under a framework suited to illiquid markets. The market friction we considered took the form of a temporary price impact, determined by the trader's turnover rate according to a power law. Furthermore, we modeled the slope corresponding to this cost as a stochastic process. Likewise, we considered the volatility of the price of the asset to be uncertain. The dynamic assumption we made over these two processes is that their driver is a multidimensional Markov diffusion.

To obtain our optimal trading strategy in the regularized setting, in which we did not require complete execution of the initial inventory, we proposed performance criteria under the Implementation Shortfall paradigm, leading us to derive the HJB PDE that the value function should solve. Under an adequate ansatz, we simplified this PDE. We were able to apply an iterative monotonicity technique to show, under mild model assumptions, that this equation admitted a unique continuous and bounded solution. We proved that the optimal trading rate thus obtained did not lead the agent to engage in speculative trading. Furthermore, our method yielded an iterative algorithm for solving the PDE numerically. We presented a number of numerical experiments.

In the last part of the work, we considered the constrained problem, where we required complete execution of the initial portfolio. We were able to show that the functions determining the optimal strategies in the regularized framework satisfied a monotonicity relation, allowing us to define their pointwise limit. We could retain a certain degree of integrability, when passing the corresponding strategies to the limit, thanks to the form of the performance criteria. Then, we used a comparison argument between the optimal rates of the original framework and the admissible strategies of the constrained one to establish that the limiting strategy is indeed a solution of the latter. The fact that it is the unique one followed from a concavity argument.

\section*{Acknowledgement}

This study was financed in part by Coordena\c{c}\~ao de Aperfei\c{c}oamento de Pessoal de N\'ivel Superior - Brasil (CAPES) - Finance code 001. MOS was partially supported by CNPq grant \# 310293/2018-9.


\bibliographystyle{apalike}
\bibliography{References}


\appendix

\section{On terminal time asymptotics of the sub- and supersolutions} \label{app:BlowupRateOfSubAndSuper}

Under the notations of Subsection \ref{subsec:SubSuper}, we observe that
\begin{equation} \label{eq:BlowupInitialEstimates}
    \frac{1}{b}\int_{-A}^\xi \frac{du}{|u|^r} \leqslant - \int_{-A}^\xi \frac{du}{a - b|u|^r} \leqslant \left(b - \frac{a}{|\xi|^r} \right)^{-1} \int_{-A}^\xi \frac{du}{|u|^r},
\end{equation}
for $-A \leqslant \xi \leqslant -(a/b)^{1/r}.$ Since
$$
\int_{-A}^\xi \frac{du}{|u|^r} = \frac{|\xi|^{1-r} - A^{1-r}}{r-1},
$$
we obtain from the first inequality in \eqref{eq:BlowupInitialEstimates} that $y(t) = F^{-1}(T-t)$ satisfies
\begin{equation} \label{eq:ConcludingGenIneqRates_pt1}
     |y| \geqslant \left[A^{1-r} + b(r-1)(T-t) \right]^{\frac{1}{1-r}}.
\end{equation}
From the second one, we estimate
\begin{align} \label{eq:IntermedStepConcIneq_pt2}
    \begin{split}
        T-t &\leqslant \left(b - \frac{a}{|y|^r} \right)^{-1} \int_{-A}^y \frac{du}{|u|^r} \\
        &= \frac{|y|^r}{b|y|^r - a}\left( \frac{|y|^{1-r} - A^{1-r}}{r-1} \right).
    \end{split} 
\end{align}
Carrying out some simple manipulations on \eqref{eq:IntermedStepConcIneq_pt2}, we deduce that
$$
|y|^{r-1}\leqslant \left[ \frac{a(T-t)(r-1)}{|y|} + 1 \right]\left[b(r-1)(T-t) + A^{1-r} \right]^{-1} \leqslant \left[ \frac{a(T-t)(r-1)}{(a/b)^{1/r}} + 1 \right]\left[b(r-1)(T-t) + A^{1-r} \right]^{-1}, 
$$
from where we conclude
\begin{equation} \label{eq:ConcludingGenIneqRates_pt2}
|y| \leqslant \left[ \frac{a(r-1)T}{(a/b)^{1/r}} + 1 \right]^{\frac{1}{r-1}}\left[b(r-1)(T-t) + A^{1-r} \right]^{\frac{1}{1-r}}.
\end{equation}
From the definitions of $\underline{z}$ and $\overline{z},$ see Subsection \ref{subsec:SubSuper}, the relations we stated in \eqref{eq:SubAndSuperBlowupRate} follow promptly from \eqref{eq:ConcludingGenIneqRates_pt1} and \eqref{eq:ConcludingGenIneqRates_pt2}.  

\section{Continuity of \texorpdfstring{$\underline{z}^*$}{limiting function}} \label{app:ContinuityIssue}

For each $(t,\boldsymbol{y}) \in \left[0,T\right] \times \mathbb{R}^d$ and $t \leqslant u \leqslant T,$ we write $z_u^{t,\boldsymbol{y}} := \underline{z}^*\left( u,\, \boldsymbol{y}_u^{t,\boldsymbol{y}} \right),$ where $\left\{ \boldsymbol{y}_u^{t,\boldsymbol{y}} \right\}_u$ has dynamics \eqref{eq:MarkovDiffusions} and initial condition $\boldsymbol{y}_t^{t,\boldsymbol{y}} = \boldsymbol{y}.$ Given $t_0 \in \left[0,T\right]$ and $\boldsymbol{y}_1,\,\boldsymbol{y}_2 \in \mathbb{R}^d,$ we can carry out simple estimates to show that the quantity
$$
\delta z_t^{t_0} := \sup_{t\leqslant u \leqslant T} \mathbb{E} \left|z_u^{t_0,\boldsymbol{y}_1} - z_u^{t_0,\boldsymbol{y}_2}\right| \hspace{1.0cm} (t_0 \leqslant t \leqslant T)  
$$
satisfies
\begin{align*}
    \delta z_t^{t_0} \leqslant &\,C \mathbb{E}\left[\int_t^T \left| e^{\int_t^u c(\tau,\boldsymbol{y}_\tau^{t_0,\boldsymbol{y}_1})\,d\tau} \sigma^{1+\phi}\left( \boldsymbol{y}_u^{t_0,\boldsymbol{y}_1} \right) - e^{\int_t^u c(\tau,\boldsymbol{y}_\tau^{t_0,\boldsymbol{y}_2})\,d\tau}\sigma^{1+\phi}\left( \boldsymbol{y}_u^{t_0,\boldsymbol{y}_2} \right) \right| \,du \right] \\
    & + C\mathbb{E}\left[\int_t^T\left| e^{\int_t^u c(\tau,\boldsymbol{y}_\tau^{t_0,\boldsymbol{y}_1})\,d\tau} \kappa^{-1/\phi}\left( \boldsymbol{y}_u^{t_0,\boldsymbol{y}_1} \right) - e^{\int_t^u c(\tau,\boldsymbol{y}_\tau^{t_0,\boldsymbol{y}_2})\,d\tau}\kappa^{-1/\phi}\left( \boldsymbol{y}_u^{t_0,\boldsymbol{y}_2} \right) \right|\,du \right] \\
    & + C \mathbb{E}\left[ \int_t^T \left| c(u,\boldsymbol{y}_u^{t_0,\boldsymbol{y}_1}) - c(u,\boldsymbol{y}_u^{t_0,\boldsymbol{y}_2}) \right|\,du + \left| e^{\int_t^T c(u,\boldsymbol{y}_u^{t_0,\boldsymbol{y}_1})\,d\tau} - e^{\int_t^T c(u,\boldsymbol{y}_u^{t_0,\boldsymbol{y}_2})\,du} \right| \right] \\
    & + C\int_t^T \delta z^{t_0}_u\,du,
\end{align*}
for $t_0 \leqslant t \leqslant T,$ whence Gronwall's Lemma implies
$$
\left| \underline{z}^*(t_0,\boldsymbol{y}_1) - \underline{z}^*(t_0,\boldsymbol{y}_2) \right| \leqslant \delta z_{t_0}^{t_0} \leqslant C \omega(t_0,\boldsymbol{y}_1,\boldsymbol{y}_2),
$$
for a suitable continuous function $\omega : \left[0,T\right] \times \mathbb{R}^d \times \mathbb{R}^d \rightarrow \mathbb{R}$ such that $\omega(t_0,\boldsymbol{y},\boldsymbol{y}) =0.$ We emphasize that we can prove the continuity of $\omega$ through standard arguments using basic SDE estimates and the Dominated Convergence Theorem, cf. the proof of the continuity part of \cite[Theorem 3.42]{pardoux2014stochastic}. Likewise, we show that $t \in \left[0,T\right] \mapsto \underline{z}^*(t,\boldsymbol{y}) \in \mathbb{R}$ is continuous, for each $\boldsymbol{y} \in \mathbb{R}^d$ (in fact, locally uniformly in the spatial variable). In this way, we conclude that 
$$
\left| \underline{z}^*(t_1,\boldsymbol{y}_1) - \underline{z}^*(t_2,\boldsymbol{y}_2) \right| \leqslant  \left| \underline{z}^*(t_1,\boldsymbol{y}_1) - \underline{z}^*(t_2,\boldsymbol{y}_1) \right| + \left| \underline{z}^*(t_2,\boldsymbol{y}_1) - \underline{z}^*(t_2,\boldsymbol{y}_2) \right| \rightarrow 0,
$$
as $(t_2,\boldsymbol{y}_2) \rightarrow (t_1,\boldsymbol{y}_1).$

\section{Proof of the verification result} \label{app:Verification}

We notice that the strategy $\left\{ \nu^*_t \right\}_t$ is clearly admissible, as we observe from its definition \eqref{eq:OptimalSpeed} that it is in fact uniformly bounded. Also, $\left|Q^*_t \right| \leqslant |q|.$ Let $\left\{ \left( U^{t,\boldsymbol{y}}_u,\,Z_u^{t,\boldsymbol{y}}\right)\right\}_{t\leqslant u \leqslant T}$ be the solution of the BSDE
$$
dU^{t,\boldsymbol{y}}_u = - \Phi(\boldsymbol{y}^{t,\boldsymbol{y}}_u,z\left( u,\boldsymbol{y}^{t,\boldsymbol{y}}_u\right) )\,du + Z_u^{t,\boldsymbol{y}}\,d\boldsymbol{W}_u,\, U_T^{t,\boldsymbol{y}} = -A.
$$
where
$$
\Phi\left(\boldsymbol{y}, z \right) = \gamma \sigma^{1+\phi}(\boldsymbol{y}) - \kappa H\left(z/\kappa\right),
$$
and
$$
d\boldsymbol{y}^{t,\boldsymbol{y}}_u = \boldsymbol{\alpha}(\boldsymbol{y}^{t,\boldsymbol{y}}_u)\,du + \boldsymbol{\beta}(\boldsymbol{y}^{t,\boldsymbol{y}}_u)d\boldsymbol{W}_u,\, \boldsymbol{y}^{t,\boldsymbol{y}}_t = \boldsymbol{y}.
$$
Explicitly, 
$$
U^{t,\boldsymbol{y}}_u = \mathbb{E}_{u,\, \boldsymbol{y}_u^{t,\boldsymbol{y}}}\left[ \int_u^T \Phi\left( \boldsymbol{y}^{t,\boldsymbol{y}}_r,\, z\left(r,\, \boldsymbol{y}^{t,\boldsymbol{y}}_r\right)  \right)\,dr - A\right] .
$$
By the Feynman-Kac formula, see Theorem \ref{eq:FeynmanKac}, the bounded function $w(t,\boldsymbol{y}) := U^{t,\boldsymbol{y}}_t$ solves, in the viscosity sense, a linear PDE that $z$ also turns out to solve; hence, by Corollary \ref{cor:Comparison}, we infer that $w \equiv z.$ By the Markov property, we have $U^{t,\boldsymbol{y}}_u = z(u,\boldsymbol{y}_u^{t,\boldsymbol{y}}),$ for $t\leqslant u \leqslant T.$ Next, we apply It\^o's formula to $\left\{ U^{t,\boldsymbol{y}}_u\left| Q_u^{\nu} \right|^{1+\phi} \right\}_{u\leqslant t \leqslant T},$ for a given $\nu \in \mathcal{U}_t,$ and take expectations to derive
\begin{align} \label{eq:Verification_main_identity}
    \begin{split}
        z(t,\boldsymbol{y})|q|^{1+\phi} =&\, U^{t,\boldsymbol{y}}_t|q|^{1+\phi} \\
        =& \, \mathbb{E}\left[-A|Q_T^{\nu}|^{1+\phi} - \int_t^T \left\{ \kappa\left( \boldsymbol{y}_u^{t,\boldsymbol{y}} \right)|\nu_u|^{1+\phi} +  \gamma \sigma^{1+\phi}\left(\boldsymbol{y}_u^{t,\boldsymbol{y}} \right)|Q^\nu_u|^{1+\phi} \right\}\,du \right] \\
        &+ \mathbb{E}\left[\int_t^T \left\{ \kappa\left(\boldsymbol{y}_u^{t,\boldsymbol{y}}\right) H\left(U^{t,\boldsymbol{y}}_u/\kappa\left(\boldsymbol{y}_u^{t,\boldsymbol{y}} \right) \right)\left| Q^\nu_u \right|^{1+\phi} - \mathcal{H}\left(Q^\nu_u,\, \boldsymbol{y}_u^{t,\boldsymbol{y}},\, U^{t,\boldsymbol{y}}_u,\, \nu_u \right) \right\}\,du \right],
    \end{split}
\end{align}
where
$$
\mathcal{H}\left(q,\, \boldsymbol{y},\, z,\, \nu \right) := (1+\phi)z\left|q\right|^\phi \sign\left( q \right)\nu - \kappa\left( \boldsymbol{y} \right) \left|\nu \right|^{1+\phi}.
$$

On the one hand, we notice that the maximum of $\nu \mapsto \mathcal{H}\left(q,\, \boldsymbol{y},\, z,\, \nu \right)$ is attained at $\nu = -\left(-z/\kappa\right)^{1/\phi}q,$ and furthermore 
$$
\mathcal{H}\left(q,\, \boldsymbol{y},\, z,\, -\left(-z/\kappa\right)^{1/\phi}q \right) = \kappa(\boldsymbol{y}) H\left(z/\kappa\left(\boldsymbol{y} \right) \right)\left| q \right|^{1+\phi},
$$
whence, from \eqref{eq:Verification_main_identity}, we always have
\begin{equation*} 
    z(t,\boldsymbol{y})|q|^{1+\phi} \geqslant \mathbb{E}\left[-A|Q_T^{\nu}|^{1+\phi} - \int_t^T \left\{ \kappa\left( \boldsymbol{y}_u^{t,\boldsymbol{y}} \right)|\nu_u|^{1+\phi} +  \gamma \sigma^{1+\phi}\left(\boldsymbol{y}_u^{t,\boldsymbol{y}} \right)|Q^\nu_u|^{1+\phi} \right\}\,du \right] = J^\nu\left(t,q,\boldsymbol{y}\right),
\end{equation*}
from where it follows that
\begin{equation} \label{eq:Verification_conclusion_pt1}
    z(t,\boldsymbol{y})|q|^{1+\phi} \geqslant \sup_{\nu \in \mathcal{U}_t } J^\nu(t,q,\boldsymbol{y}) =  J\left(t,q,\boldsymbol{y}\right).
\end{equation}
On the other hand, by setting $\nu = \nu^*$ in \eqref{eq:Verification_main_identity}, we infer
\begin{equation} \label{eq:Verification_conclusion_pt2}
    z(t,\boldsymbol{y})|q|^{1+\phi} = J^{\nu^*}\left(t,q,\boldsymbol{y}\right)
\end{equation}
Putting relations \eqref{eq:Verification_conclusion_pt1} and \eqref{eq:Verification_conclusion_pt2} together, we establish the verification result.


\end{document}